\long\def\todo#1 { {\bf TODO:} [{\color{blue} #1}] }
\newcommand{\calO}{\mathcal{O}}
\newcommand{\pseq}{\mathcal{S}}
\title{Boosting Payment Channel Network Liquidity with Topology Optimization and Transaction Selection}
\titlerunning{Topology Optimization and Transaction Selection in PCNs}
\author{Krishnendu Chatterjee}{ISTA, Austria}{krishnendu.chatterjee@ist.ac.at}{}{European Research Council CoG 863818 (ForM-SMArt) and Austrian Science Fund 10.55776/COE12}
\author{Jan Matyáš Křišťan}{Faculty of Information Technology, Czech Technical University, Czechia}{kristja6@fit.cvut.cz}{}{Czech Science Foundation Grant no. 24-12046S}
\author{Stefan Schmid}{TU Berlin \& Weizenbaum Institute, Germany}{stefan.schmid@tu-berlin.de}{}{German Research Foundation (DFG) project ReNO (SPP 2378) from 2023-2027}
\author{Jakub Svoboda}{ISTA, Austria}{jakub.svoboda@ist.ac.at}{}{European Research Council CoG 863818 (ForM-SMArt) and Austrian Science Fund 10.55776/COE12}
\author{Michelle Yeo\footnote{Corresponding author}}{National University of Singapore, Singapore}{mxyeo@nus.edu.sg}{}{MOE-T2EP20122-0014 (Data-Driven Distributed Algorithms)}
\authorrunning{K. Chatterjee, J. Křišťan, S. Schmid, J. Svoboda and M. Yeo} 
\keywords{Blockchains, Cryptocurrencies, Payment Channel Networks, Throughput, Optimisation, Graph Algorithms, Approximation Algorithms} 
\begin{document}

\maketitle

\begin{abstract}
Payment channel networks (PCNs) are a promising technology that alleviates blockchain scalability by shifting the transaction load from the blockchain to the PCN. 
Nevertheless, the network topology has to be carefully designed to maximise the transaction throughput in PCNs. Additionally, users in PCNs also have to make optimal decisions on which transactions to forward and which to reject to prolong the lifetime of their channels.
In this work, we consider an input sequence of transactions over $p$ parties. Each transaction consists of a transaction size, source, and target, and can be either accepted or rejected (entailing a cost).
The goal is to design a PCN topology among the $p$ cooperating parties, along with the channel capacities, and then output a decision for each transaction in the sequence to minimise the cost of creating and augmenting channels,
as well as the cost of rejecting transactions.
Our main contribution is an $\calO(p)$ approximation algorithm for the problem with $p$ parties.
We further show that with some assumptions on the distribution of transactions, we can reduce the approximation ratio to $\calO(\sqrt{p})$.
We complement our theoretical analysis with an empirical study of our assumptions and approach in the context of the Lightning Network.
\end{abstract}

\begin{CCSXML}
<ccs2012>
   <concept>
       <concept_id>10003033.10003068</concept_id>
       <concept_desc>Networks~Network algorithms</concept_desc>
       <concept_significance>500</concept_significance>
       </concept>
   <concept>
       <concept_id>10003752.10003809.10003635</concept_id>
       <concept_desc>Theory of computation~Graph algorithms analysis</concept_desc>
       <concept_significance>500</concept_significance>
       </concept>
   <concept>
       <concept_id>10003752.10003809.10003636</concept_id>
       <concept_desc>Theory of computation~Approximation algorithms analysis</concept_desc>
       <concept_significance>500</concept_significance>
       </concept>
 </ccs2012>
\end{CCSXML}

\ccsdesc[500]{Networks~Network algorithms}
\ccsdesc[500]{Theory of computation~Graph algorithms analysis}
\ccsdesc[500]{Theory of computation~Approximation algorithms analysis}

\keywords{network algorithms, approximation algorithms, payment channel networks, cryptocurrencies}

\section{Introduction}
Blockchain scalability is one of the key bottlenecks in the mainstream adoption of cryptocurrencies~\cite{ChauhanMVM18,CromanDEGJKMSSS16,JainSG21}.
For instance, Bitcoin can only process an average of $7$ transactions per second which is paltry compared to Visa's $47,000$ \cite{poon2015lightning}. 
This makes it impractical as of yet for the widespread usage of cryptocurrencies in everyday situations. 

Some promising solutions to blockchain scalability are layer $2$ solutions like payment channel networks (PCNs)~\cite{decker2018eltoo,decker2015fast,mccorry2019pisa,poon2015lightning}, which allow users to bypass the blockchain and transact directly with each other.
To do so, two users have to first open a payment channel between themselves in an initial funding transaction on the blockchain, thereby locking some funds only to be used in the channel. 
Thereafter, the two users transact with each other simply by updating the distribution of funds on each end of the channel to reflect the payment amount. 
As compared to the tedious process of waiting for consensus on the blockchain, these payments can be finalised almost instantaneously since they only involve exchanging signatures between the two parties. 
As such, with only a constant number of blockchain transactions, any two users can make an unbounded number of off-chain transactions in the payment channel, consequently increasing the overall transaction throughput of the blockchain. 

A PCN is a network of payment channels and users, where two users that are not directly connected to each other with a payment channel can still transact with each other in a multi-hop fashion over a path of payment channels. 
To incentivize intermediary nodes on payment paths to forward payments, PCNs allow these intermediary nodes to charge a small transaction forwarding fee that is typically linear in the transaction amount.
Two of the most notable examples of PCNs are Bitcoin's Lightning Network~\cite{poon2015lightning} and Ethereum's Raiden~\cite{raiden}. 

Apart from joining a PCN in order to transact efficiently with other users, a secondary consideration for users in PCNs is to whom they should establish channels and at which capacities, as well as which transactions then to select for forwarding.
Aside from the opportunity cost of rejecting to forward transactions and thus missing out on the revenue from transaction fees, users in PCNs also incur a cost whenever channels are created (the on-chain transaction fee) as well as when additional capacity is injected into the channel.
As such, although it is tempting to accept to forward all transactions and thus profit from the transaction fees, doing so would quickly deplete the nodes' capacity on their incident channels. 
A crucial optimisation problem that intermediary nodes in transaction paths would therefore have to consider is deciding which transactions to accept and which to reject, given a minimal injection of capacity into their channel. 

In this work, we study the problem of both optimal topology design and transaction selection over a PCN. 
The input to the problem is a sequence of transactions and a set of $p$ coordinating parties.
Each transaction in the sequence is of a certain amount, and is associated with a source and target from the set of parties.
The problem then proceeds in two stages: in the first stage, the goal is for the $p$ parties to decide jointly on a PCN topology over themselves. 
Each created channel in this stage incurs both a fixed channel creation cost as well as a capacity cost that depends on the amount of funds injected into the channel.
The second stage involves processing the transactions in the transaction sequence, either accepting or rejecting them (and in so doing incurring a rejection fee) along the created paths in the network. 
The goal is to create a network and inject enough capacity into each channel such as to maximise the number of transactions accepted while incurring the least amount of cost (see the cost model in~\Cref{sec:model} for more details). 

\subsection{Related work}
To the best of our knowledge, our work is the first to consider the problem of designing an optimal PCN topology for optimising the problem of transaction selection in PCNs.
Our work is closely related to the work of~\cite{SchmidSY23} which considers a similar problem over a single payment channel and shows that the problem of deciding which transactions to accept or reject over a single channel is already NP-hard.
The main open question of~\cite{SchmidSY23} is how to extend their proposed algorithm from a single channel to an entire network.
Our work addresses this problem by proposing a solution for network design as well as algorithms for optimally accepting or rejecting transactions along paths in a network.  
Our work is also related to the works of~\cite{AvarikiotiB0W19,online} which consider a similar problem of transaction selection but in the online setting and also limited to a single channel.
Additionally, the works of~\cite{lntop,GuasoniHS24} also study minimal-cost PCN topologies, but do not consider the complementary problem of transaction selection.

Another related line of research is network creation games on cryptocurrency networks~\cite{AvarikiotiH0W20ride,lcg,DBLP:conf/fc/ErsoyRE20}. These works study how a node should connect to an existing cryptocurrency network in a stable and optimal way, optimising the utility of the single node in a unilateral fashion. In contrast, our work proposes a method of creating a network among any number of nodes such as to \emph{jointly minimise} the cost (over all nodes) of network creation and processing transactions.

A crucial assumption in our model is coordination between the interested parties in creating the network during the network creation stage. 
We stress that this is not an impractical assumption and several works have made similar assumptions~\cite{AvarikiotiH0W20ride,lcg,online} where optimal behaviour is analysed in the setting where some or all involved parties cooperate and create certain network topologies together.

Further upfield but also related are works analysing the consensus number of a cryptocurrency~\cite{GuerraouiKMPS19,herlihy91}. 
The transactions studied in our work, and in general the sequential nature of layer 2 transactions on PCNs, have consensus number 1, implying that expensive consensus protocols are unnecessary in such settings. 
Additionally, our work is also related to the classic works on theory of contention management in transactional memory~\cite{AttiyaEST10,poly05,GuerraouiHP06}.

More generally, our work is related to optimising flows and throughput in typical capacitated communication networks~\cite{chekuri2004all,raghavan1985provably}.
However, we stress a crucial difference between our problem and problem over traditional communication networks:
in traditional communication networks, the capacity is usually independent in the two directions of the channel~\cite{Gupta2000TheCO}. 
In our case, the amount of transactions $u$ sends to $v$ in a channel $(u,v)$ directly affects $v$'s capability to send transactions, as each transaction $u$ sends to $v$ increases $v$'s capacity on $(u,v)$.

\subsection{Main challenges and our contributions}
We generalise the model of processing transactions over a single channel proposed in~\cite{SchmidSY23} to an entire network. 
The first challenge in shifting from optimising processing transactions from a single channel to a network is deciding on the optimal network topology among the cooperating parties.
Here, we show in~\Cref{thm:star} in~\Cref{sec:star} that the star graph of size $p$ is a $2$-approximation of the problem of creating an optimal PCN over $p$ parties.

Once the optimal topology is decided, the next challenge would be to decide on which transactions to accept to forward, given that transactions have to be accepted along the entire payment path. 
In this setting, we leverage the star topology to generalise the existing single channel algorithm in~\cite{SchmidSY23} to get an algorithm for transaction selection optimisation in the star.
Our main algorithm in this setting is described in \Cref{sec:alggeneral} and we prove that it is an $\calO(p)$ approximation of the weighted transaction selection problem in general graphs.

We describe another algorithm in~\Cref{sec:algcluster} that has an improved approximation ratio of $\calO(\sqrt{p})$ (see \Cref{thm:main}) for the transaction selection problem in general graphs. 
To achieve this approximation ratio, we adopt an additional realistic and reasonable assumption on the distribution of transactions in the transaction sequence. Namely, we assume that transactions are clustered, following a stochastic block model.

Finally, we augment our theoretical analysis with a case study of the Lightning Network in~\Cref{sec:eval}. Using simulations, we find that the network statistics correspond to and thus justify our transaction distribution assumptions. 

\section{Model}\label{sec:model}

\smallskip{\em Payment channels.}
A unique aspect of our problem is that we optimise the selection of weighted transactions over payment channels. Payment channels can be formally modeled as edges that satisfy the following three properties:

\begin{enumerate}[leftmargin=0.5cm]
    \item The capacity or weight of a channel $(u,v)$ is the sum of the initial capacities $b_u$ and $b_v$ injected into the channel by users (or nodes) $u$ and $v$. The capacities $b_u$ and $b_v$ correspond to the initial amount of funds that users $u$ and $v$ choose to inject into the channel during channel creation.
    \item Given a capacitated channel $(u,v)$ with weight $w$, the weight or capacity can be arbitrarily split between both ends of the channel depending on the number and weight of transactions processed by $u$ and $v$. 
    That is, $b_u$ and $b_v$ can be arbitrary as long as $b_u + b_v = w$ and $b_u, b_v \geq 0$. 
    For instance, given a payment channel $(u,v)$ with an initial capacity split of $b_u$ and $b_v$, if $u$ sends a payment of amount $x$ to $v$, the capacity on $u$'s end of the channel drops to $b_u-x$ and the capacity of $v$'s end increases to $b_v+x$ after processing the transaction.
    See~\Cref{fig:example} for more examples of how the capacities on each end of a payment channel can vary in the course of processing transactions.
    \item The total capacity of a payment channel is invariant throughout the lifetime of the channel. 
    That is, it is impossible for nodes to add to or remove any part of the capacity in the channel. 
    In particular, if a node is incident to more than one channel in the network, the node cannot transfer part of its capacity in one channel to ``top up'' the capacity in the other one. 
\end{enumerate}

\begin{figure}[htb!]
    \centering
    \includegraphics[scale=0.7]{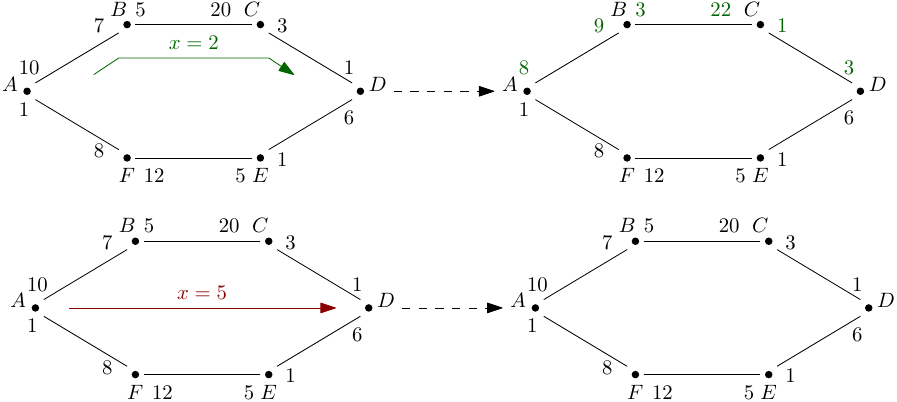}
    \caption{The diagram on the top shows the outcome of processing a transaction of size $2$ going from $A$ to $D$ along the path $(A,B,C,D)$ in the graph. 
    The diagram on the bottom shows the outcome of processing a transaction of size $5$ from $A$ to $D$. The transaction has to be rejected as the size of the transaction is larger than the sum of the capacities of $C$ in the channel $(C,D)$ and either $A$ or $E$ in the channel $(A,F)$ or $(E,D)$ respectively.}
    \label{fig:example}
\end{figure}

\smallskip{\em Payment channel network.}
We model a PCN as a weighted, undirected graph $G=(V,E)$ where the vertices or nodes of $G$ are users in the PCN and the edges of $G$ are payment channels between nodes. 
We denote the capacity of a channel $(u,v) \in E$ as $w_{u,v}$, and the capacity of nodes $u$ and $v$ on the channel $(u,v)$ as $b_{u,v}$ and $b_{v,u}$ respectively.

\smallskip{\em Transaction sequence.}
Let $V$ denote a set of nodes (parties). 
We denote a finite ordered sequence of transactions over $V$ by $X_n = ((s_1, t_1, x_1), \dots, (s_n, t_n, x_n))$, where $s_i, t_i \in V$ represents the sender and target of the $i$th transaction respectively and $x_i \in \mathbb{R}^+$ represents the weight or size of the $i$th transaction.

\smallskip{\em Processing transactions in PCNs.}
Let $G=(V,E)$ be a connected PCN with some capacity over all of its channels. 
Let $X_n$ be a sequence of transactions over $V$.
Consider the $i$th element of the sequence $(s_i, t_i, x_i)$ and suppose the transaction travels over some path $\pi_i = (s_i, \dots, t_i)$ from the $i$th sender to the $i$th target.
Let us call the first node along each channel in the path $\pi_i$ the \emph{forwarding node}.
Each forwarding node in the path $\pi_i$ can choose to do the following to the transaction:

\begin{itemize}[leftmargin=0.5cm]
    \item \textbf{Accept transaction.} For a forwarding node $u$ in a channel $(u,v) \in \pi_i$, node $u$ can accept to forward the transaction if their capacity in $(u,v)$ is at least the size of the transaction. 
    \item \textbf{Reject transaction.} Node $u$ can also reject the transaction. This could happen if $u$'s capacity is insufficient, or if accepting the transaction would incur a larger cost in the future. For a transaction of weight $x$, the cost of rejecting the transaction is $f \cdot x + m$ where $f,m \in \mathbb{R}^+$.
    Apart from incurring some cost, rejecting a transaction does not alter the capacity distribution on both ends of the channel (see the diagram on the bottom of~\Cref{fig:example} for an example of rejecting a transaction).
\end{itemize}
We note that node $v$ does not need to take any action nor incur any cost when transactions are going in the direction of $u$ to $v$ along a channel $(u,v)$.
Finally, we stress that decisions on each channel in a transaction path need to be the same.

\smallskip{\em Problem definition.}
Suppose we have $p$ coordinating parties that want to come together to create a PCN that allows the processing of transactions between themselves. 
The input to the problem is a transaction sequence over the $p$ coordinating parties. 
The problem then proceeds in $2$ stages. 
The first stage of the problem is the \emph{network creation stage}, where the $p$ parties have to firstly decide on a PCN topology over themselves.
Once the PCN has been created, the second stage of the problem is \emph{transaction selection}, where the network topology created in stage $1$ determines the path the transactions travel over.
In this stage of the problem, the parties need to decide whether to accept or reject each transaction in the sequence.

\smallskip{\em Hardness.}
We know from~\cite{SchmidSY23} that the transaction selection problem over a single channel is NP-hard.
Our problem includes $p>2$ parties and we also need to find the optimal network topology over these $p$ parties (for two parties, it is only a channel).
This means that our problem is at least as hard as the problem on a single channel, and hence NP-hard as well.

\smallskip{\em Objective and costs.}
We consider three types of costs faced by the $p$ parties: 
\begin{enumerate}[leftmargin=0.5cm]
    \item \textbf{Channel creation costs.} 
    We assume creating each channel incurs a fixed auxiliary cost of $k>0$. 
    \item \textbf{Capacity costs.}
    For each channel $(u,v)$ created between parties $u$ and $v$, the capacity cost of $(u,v)$ is the sum of the capacities $b_{u,v},b_{v,u}$ injected by $u,v$ into the channel. 
    \item \textbf{Rejection costs.}
    These are costs incurred during transaction selection when a party rejects a transaction. 
\end{enumerate}
The \emph{total cost} incurred by the parties is the sum (over all channels in the PCN and transaction decisions) of channel creation, capacity, and rejection costs. 
The objective of the $p$ parties is to create a PCN over themselves and output the decisions for each transaction in the sequence such that the total cost is minimised. 

\section{Accepting all transactions}

We begin our investigation of transaction selection over general graphs by first considering the problem a restricted setting where all transactions in the input transaction sequence need to be accepted.
In this setting, the total cost consists only of channel creation and capacity costs.
We also assume here that we know the optimal PCN topology among the $p$ parties, which we denote by $G = (V,E)$. 

\subsection{Linear program for a set of channels}

We present a linear program (LP) that, for a given set of channels, computes the optimal capacity that the parties need to lock into the channels of the network in order to accept all transactions in the sequence.
Indeed, if the channel creation cost $k$ is small or if we want to know a lower bound for our problem, we can assume the optimal graph $G$ is complete.
For a node $v \in G$, let $Ne(v)$ denote its neighbours.
Let $S_v$ denote the \emph{total balance} of node $v$ in the graph, which is simply the sum of its balances on all its incident channels. 
For a channel $(u,v) \in E$, we denote the balance of node $v$ (resp. $u$) after processing the $i$th transaction in the sequence by $b^i_{v,u}$ (resp. $b^i_{u,v}$), and we also denote the total balance of $v$ after processing the transaction by $S^i_v$.
Let $AC(\mathcal{S}, G)$ denote the optimal capacity cost for the graph $G$ where every transaction in $X_t$ is accepted. That is, the total capacity injected into the channels of the graph $G$. 


Before describing the LP, we define a \emph{difference function} $\delta_v(s,t)$ between a source $s$ and target $t$ in a transaction path:
\[
    \delta_v(s,t) = 
    \begin{cases}
        1 \qquad  & v = t \text{ and } v \neq s\\
        -1 \qquad & v = s \text{ and } v \neq t\\
        0  \qquad & \text{otherwise}
    \end{cases}
\]
This function will be used in the LP to maintain the invariant that processing a transaction can only affect the total balances of the source and target nodes in the transaction path, but the total balance of every other node in the graph must not change.

Given $G$ and a transaction sequence $\pseq := ((\pi_1,x_1), \dots, (\pi_n,x_n))$ over $G$, we construct the following LP.

\begin{align*}
  \qquad\qquad \text{minimise } \sum_{(v,u) \in E} w_{v,u}\\
  \forall v,u,i :& \qquad w_{v,u} = b_{v,u}^i + b_{u,v}^i\\
  \forall v,i :& \qquad S_v^i = \sum_{u \in Ne(v)} b_{v,u}\\
  \forall v, i, (s_i, t_i, x_i) : & \qquad S_{v}^i-\delta_v(s_i,t_i)x_i = S_{v}^{i-1}
\end{align*}

For the sequence of transactions $\pseq$ and graph $G$, we denote the solution of the LP with the respective parameters by $LP(\pseq, G)$.
The following lemma (with proof in~\Cref{app:prooflp}) proves that the solution of the LP is a lower bound on the capacity cost of the problem in the setting where all transactions are to be accepted. 

\begin{restatable}[]{lemma}{lp}
\label{lemma:lp}
  Given the sequence of transactions $\pseq$,  we have $LP(\pseq, K_p) \leq AC(\pseq,G)$.
\end{restatable}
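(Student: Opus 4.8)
The plan is to show that any feasible solution of the transaction-selection problem in the ``accept all'' setting on the optimal graph $G$ induces a feasible solution of the LP on the complete graph $K_p$ with the same (or smaller) objective value, so that the LP optimum $LP(\pseq, K_p)$ is a lower bound on $AC(\pseq, G)$. Since the LP minimizes $\sum_{(v,u)} w_{v,u}$, which is exactly the total injected capacity, it suffices to exhibit for the optimal accept-all strategy on $G$ a setting of the LP variables $b_{v,u}^i$ that respects all three families of constraints and whose total channel weight equals $AC(\pseq, G)$.

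First I would take the actual run of the optimal accept-all algorithm on $G$: it fixes initial balances $b_{v,u}^0$ on every channel $(v,u) \in E(G)$ with $\sum_{(v,u)\in E(G)} (b_{v,u}^0 + b_{u,v}^0) = AC(\pseq, G)$, and as each transaction $(\pi_i, x_i)$ is processed along its path every forwarding node's balance changes by $\mp x_i$, giving balances $b_{v,u}^i$ after step $i$. I then embed this into $K_p$ by keeping exactly these balances on the channels that $G$ uses and setting $b_{v,u}^i = 0$ on every pair $(v,u)$ that is not an edge of $G$, for all $i$ (including $i=0$). The first LP constraint, $w_{v,u} = b_{v,u}^i + b_{u,v}^i$, is then just the definition of the channel weight, and it is time-invariant precisely because processing a transaction on a channel moves capacity from one side to the other without changing the sum — this is property (3) of payment channels in the model. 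The second constraint, $S_v^i = \sum_{u \in Ne(v)} b_{v,u}^i$ (reading the sum over neighbours in $K_p$, which harmlessly includes the zero terms), is simply the definition of the total balance $S_v^i$, so it holds by construction.

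The substantive step is the third constraint, $S_v^i - \delta_v(s_i,t_i)\, x_i = S_v^{i-1}$, which encodes the ``conservation of total balance'' invariant: processing the $i$th transaction decreases the total balance of the true source by $x_i$, increases that of the true target by $x_i$, and leaves every other node's total balance unchanged. I would verify this by walking along the path $\pi_i = (s_i, u_1, \dots, u_{k-1}, t_i)$: each internal node $u_j$ of the path receives $x_i$ on its incoming channel and forwards $x_i$ on its outgoing channel, so its two balance changes cancel and its total balance is unchanged; the source $s_i$ only forwards, losing $x_i$ total; the target $t_i$ only receives, gaining $x_i$ total; and any node not on $\pi_i$ is untouched. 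This matches the case definition of $\delta_v(s_i,t_i)$ exactly. (A minor point to handle cleanly: if $s_i = t_i$ the transaction is trivial and $\delta_v \equiv 0$; and one should note that these balance-change facts hold for the accept-all execution regardless of which concrete path the algorithm routed transaction $i$ on.) Finally, the objective value of this feasible LP point is $\sum_{(v,u) \in E(K_p)} w_{v,u} = \sum_{(v,u) \in E(G)} w_{v,u} = AC(\pseq, G)$, and since $LP(\pseq, K_p)$ is the minimum over all feasible points, $LP(\pseq, K_p) \le AC(\pseq, G)$.

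The main obstacle I anticipate is not any single hard inequality but rather bookkeeping care in two places: making sure the time-indexing lines up (the LP's $w_{v,u} = b_{v,u}^i + b_{u,v}^i$ must hold for \emph{every} $i$, which forces us to use the fact that total channel capacity never changes, not just that it starts at some value), and making sure the embedding into $K_p$ is legitimate — i.e.\ that zero-capacity ``virtual'' channels in $K_p$ genuinely satisfy all constraints and contribute nothing to the objective, and that $Ne(v)$ in the LP is understood relative to $K_p$ so the second constraint's sum is well-defined. Once these are pinned down, the conservation argument for the third constraint is a short telescoping over the path. I would also remark that the lemma as stated compares against $AC(\pseq, G)$ for the optimal $G$, so implicitly we are using that $G$ is a subgraph of $K_p$ (any graph on $p$ labelled parties is), which is what lets the embedding go through.
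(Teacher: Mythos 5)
Your proof is correct, but it follows a different route from the paper's. The paper argues through the intermediate quantity $AC(\pseq, K_p)$: it first claims that adding channels to a graph can only decrease the capacity cost (so the complete graph is the cheapest topology for accepting everything), and then observes that the LP computes the optimal capacity for $K_p$, giving $LP(\pseq,K_p) \le AC(\pseq,K_p) \le AC(\pseq,G)$. You instead skip the intermediate step entirely and show directly that the optimal accept-all execution on $G$ embeds as a feasible point of the LP on $K_p$ (zero weight on non-edges of $G$, balances carried over verbatim, conservation of $S_v^i$ verified by telescoping along each transaction path), so the LP minimum is at most $AC(\pseq,G)$. Your version buys two things: it makes rigorous the ``adding channels only helps'' step, which in the paper is asserted via a shortest-path intuition but is really the same zero-capacity embedding you spell out; and it only uses that the LP is a \emph{relaxation}, whereas the paper's phrasing (``the LP computes the optimal capacity for $K_p$'') overstates matters slightly --- the LP tracks only the aggregate balances $S_v^i$ and permits redistributing a node's capacity across its incident channels between steps, which a real PCN forbids, so the LP value can in principle be strictly below $AC(\pseq,K_p)$. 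Your argument is unaffected by this because feasibility of the embedded point is all you need. The only loose end worth a sentence in your write-up is the case where the accept-all solution on $G$ splits a transaction across several paths: your per-path telescoping extends by linearity of the balance updates, so the conservation constraint still holds.
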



\subsection{Star as a $2$-approximation}\label{sec:star}

Although the capacity cost for the setting where all transactions are accepted is lowest in the complete graph $K_p$, the complete graph has $\Theta(p^2)$ edges. Thus, depending on the auxiliary channel creation cost $k$, the cost of creating $K_p$ as well as injecting sufficient capacity into all channels of $K_p$ might be large.

As a useful heuristic, we observe from the proof of~\Cref{lemma:lp} that the capacity cost is related to the diameter of the graph: the smaller the diameter, the lower the cost.
Thus, a useful class of graphs to look at for the network creation stage of the problem are graphs that have both a small diameter and a small number of channels. 
An example of a graph that simultaneously satisfies both of these requirements is a star.
In a star, one central node is connected by a channel to all other nodes.
Thus, the star has a small diameter of $2$, and $p-1$ channels, which is minimal for a connected graph over $p$ nodes.
We denote the star on $p$ nodes as $K_{1,p-1}$.

We first observe that for the star, it is easy to compute a lower bound on the amount of capacity required in the setting where all transactions are accepted. 
This is due to the fact that between any two nodes $u,v$ in a star, there is only one unique path channeling $u$ and $v$, rendering transactions unsplittable.
The capacity on the channel between any node $v$ and the centre of the star $c$, $w_{v,c}$, has to therefore be larger than $\max_{j,k \le n} \quad |\sum_{s_i,t_i,x_i | k \le i \le j} \delta_v(s_i,t_i) \cdot x_i|$, as this quantity captures the largest difference in the balance between both users of the channel.

We now show in~\Cref{thm:star} that we can upper bound the optimal capacity cost on the star. The proof of~\Cref{thm:star} is in~\Cref{app:proofstar}.

\begin{restatable}[]{theorem}{star}
\label{thm:star}
  For every graph $G$ over $p$ nodes, and any graph $G'$ with diameter $d$, the following inequality holds
  \[
    AC(\pseq, G') \le d \cdot AC(\pseq, G) \,.
  \]
\end{restatable}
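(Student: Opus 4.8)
The plan is to take an optimal ``accept-all'' solution on $G$ and rewrite it as a feasible solution on $G'$ whose total capacity has grown by at most a factor $d$. (If $AC(\pseq,G)=\infty$ there is nothing to prove, so assume it is finite; since $G'$ has a finite diameter $d$ it is connected, so $AC(\pseq,G')$ makes sense.) I first pin down what ``required capacity on a channel'' means. Fix an optimal solution on $G$: for each transaction $i$ a routing that pushes a net amount $x_i$ from $s_i$ to $t_i$ through $G$, together with capacities $w_e$ on the channels $e\in E(G)$ with $\sum_e w_e = AC(\pseq,G)$. For $e=(a,b)$ let $\phi_i(e)$ be the net amount pushed across $e$ from $a$ to $b$ during transaction $i$, and $D_e^j=\sum_{i\le j}\phi_i(e)$ the cumulative net flow after $j$ transactions (with $D_e^0=0$). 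Neither endpoint's balance on $e$ ever goes negative iff the initial split can absorb the whole swing of $D_e$, so the smallest capacity on $e$ that admits a valid initial balance split is exactly $\operatorname{osc}(D_e):=\max_j D_e^j-\min_j D_e^j$; hence we may assume $w_e=\operatorname{osc}(D_e)$. Then fix, for every $e=(a,b)\in E(G)$, a shortest $a$--$b$ path $Q_e$ in $G'$, of length $|Q_e|\le d$.

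Next I would build the $G'$-solution: route transaction $i$ by replacing, in its $G$-route, every edge $e=(a,b)$ carrying $\phi_i(e)$ units from $a$ to $b$ by the path $Q_e$ carrying $\phi_i(e)$ units from $a$ to $b$. Each internal vertex of a $Q_e$ neither sources nor sinks flow, and each endpoint reproduces exactly what $a$ or $b$ did on $e$ in $G$, so the divergence at every vertex matches that of the $G$-routing; hence this is a valid way to move $x_i$ from $s_i$ to $t_i$ in $G'$. For a channel $f\in E(G')$ the resulting cumulative net flow is $D_f=\sum_{e:\,f\in Q_e}\sigma_{e,f}\,D_e$, where $\sigma_{e,f}\in\{\pm1\}$ records whether $Q_e$ crosses $f$ with or against $f$'s orientation (and does not depend on $i$, since $Q_e$ is fixed).

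Finally I would bound the cost. The functional $\operatorname{osc}$ is subadditive and sign-invariant, so $\operatorname{osc}(D_f)\le\sum_{e:\,f\in Q_e}\operatorname{osc}(D_e)=\sum_{e:\,f\in Q_e}w_e$; summing over $f\in E(G')$ and exchanging the order of summation,
\[
  AC(\pseq,G')\;\le\;\sum_{f\in E(G')}\operatorname{osc}(D_f)\;\le\;\sum_{e\in E(G)}w_e\,|Q_e|\;\le\;d\sum_{e\in E(G)}w_e\;=\;d\cdot AC(\pseq,G),
\]
where the first inequality holds because the constructed routing is feasible on $G'$ with the capacities $\operatorname{osc}(D_f)$. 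Applying this with $G'=K_{1,p-1}$ (diameter $2$) and $G$ the optimal topology gives the $2$-approximation on capacity cost quoted in~\Cref{sec:star}.

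The step I expect to be most delicate is the reduction ``required channel capacity $=$ oscillation of the cumulative net flow'', hand in hand with verifying that substituting the shortest paths $Q_e$ produces a genuine $s_i$--$t_i$ flow in $G'$ and that the signs $\sigma_{e,f}$ line up so that $\operatorname{osc}$-subadditivity applies edge by edge; everything afterwards is bookkeeping. It is also worth checking explicitly that the argument is indifferent to whether a single transaction may be split over several paths, since it is phrased purely in terms of net flows $\phi_i$.
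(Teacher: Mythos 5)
Your proof is correct and follows essentially the same route as the paper's: both simulate each channel of the source topology by a fixed shortest path in $G'$ (of length at most $d$) and charge that channel's capacity to every edge of the path, giving the factor $d$. Your only refinements are that you work with the optimal solution on $G$ directly rather than first passing to the complete graph, and that you bound the aggregated capacity on each edge of $G'$ via sign-invariance and subadditivity of the oscillation of the cumulative net flow instead of keeping disjoint per-pair ``buckets''; neither changes the substance of the argument.
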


\begin{corollary}\label{cor:star}
  For every graph $G$ over $p$ nodes, and graph $K_{1,p-1}$, the following inequality holds
  \[
    AC(\pseq, K_{1,p-1}) \le 2 \cdot AC(\pseq, G) \,.
  \]
\end{corollary}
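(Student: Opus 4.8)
The plan is to start from an optimal ``accept-all'' solution on $G$ and transform it into a feasible solution on $G'$ whose capacity cost has grown by a factor of at most $d$. Concretely, fix an optimal solution for $G$: a routing that sends the $i$-th transaction along a path $\pi_i^\ast$ in $G$, together with capacities $w_e^\ast$ and balances $b^{\ast}_{\cdot,\cdot}$ on the channels of $G$ that stay non-negative after every prefix of the sequence and satisfy $\sum_{e \in E(G)} w_e^\ast = AC(\pseq, G)$. For each channel $e = (u,v) \in E(G)$, fix a shortest $u$--$v$ path $Q_e$ in $G'$; since $G'$ has diameter $d$, we have $|Q_e| \le d$.

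The construction on $G'$ replaces each channel by a path: whenever the $G$-solution forwards an amount $x$ across a channel $e=(u,v)$, the $G'$-solution forwards $x$ hop by hop along $Q_e$ in the corresponding direction, so each $\pi_i^\ast$ becomes a walk $\pi_i'$ in $G'$ obtained by concatenating the paths $Q_e$ of its channels, processed in order. For the capacities, I would give every channel $f \in E(G')$ a separate ``slot'' of size $w_e^\ast$ for each $e \in E(G)$ with $f \in Q_e$; inside the slot of $e$, the endpoint of $f$ closer to $u$ along $Q_e$ is initialised to $b^{\ast}_{u,v}$ and the other to $b^{\ast}_{v,u}$, and is updated exactly as the two ends of $e$ are. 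The feasibility check is the heart of the argument: an internal node of $Q_e$ receives $x$ on one incident slot and immediately forwards $x$ on the next, so its slot balance is unchanged overall; and at every step the slot balance of the forwarding endpoint equals the balance of the forwarding end of $e$ in the $G$-solution, hence is at least $x$. Therefore all balances remain non-negative and every transaction of $\pseq$ is accepted on $G'$.

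For the cost bound, the total capacity placed on a channel $f \in E(G')$ is at most $\sum_{e \in E(G):\, f \in Q_e} w_e^\ast$, so summing over $f$ and swapping the order of summation,
\[
  AC(\pseq, G')
  \;\le\; \sum_{f \in E(G')}\ \sum_{\substack{e \in E(G)\\ f \in Q_e}} w_e^\ast
  \;=\; \sum_{e \in E(G)} |Q_e|\cdot w_e^\ast
  \;\le\; d \sum_{e \in E(G)} w_e^\ast
  \;=\; d \cdot AC(\pseq, G).
\]
The corollary then follows immediately, since $K_{1,p-1}$ is connected with diameter $2$: taking $G' = K_{1,p-1}$ gives $AC(\pseq, K_{1,p-1}) \le 2\cdot AC(\pseq, G)$ for every graph $G$ on $p$ nodes.

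The main obstacle I expect is the bookkeeping in the feasibility check rather than any deep idea: one must ensure that distinct channels of $G$ whose simulating paths overlap in $G'$ do not interfere (handled by the per-edge capacity slots), and that a single transaction whose path $\pi_i^\ast$ uses several channels all mapped to the same $f$ is still served (handled by traversing the slots of $f$ one after another in the order dictated by $\pi_i^\ast$). A secondary point worth spelling out carefully is the correspondence between the orientation of a hop in $Q_e$ and the roles of $u$ and $v$, so that it is exactly the non-negativity of $b^{\ast}_{u,v}$ and $b^{\ast}_{v,u}$ in the $G$-solution that is being invoked.
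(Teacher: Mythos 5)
Your proposal is correct and follows essentially the same route as the paper: the paper proves the general diameter-$d$ bound (its Theorem~\ref{thm:star}) by routing each channel $(u,v)$ of a reference accept-all solution along a shortest $u$--$v$ path in $G'$ and reserving a dedicated ``bucket'' of capacity $w_{u,v}$ on every edge of that path, which is exactly your per-edge capacity slots, and the corollary is then the specialization $d=2$ for $K_{1,p-1}$. The only cosmetic difference is that the paper simulates the complete-graph LP solution (invoking $LP(\pseq,K_p)\le AC(\pseq,G)$) whereas you simulate the optimal solution on $G$ directly; both yield the same bound.
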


With these insights, we restrict the network created among the $p$ parties during the network creation stage of the weighted transaction problem to a star in~\Cref{sec:alggeneral}, and a star of stars (of diameter $4$, see~\Cref{fig:doublestar} for an example) in~\Cref{sec:algcluster}.

\section{A general setting approximation algorithm}\label{sec:alggeneral}
We now present an approximation algorithm for the general setting where transactions can be either accepted or rejected.
We first reiterate that from \Cref{thm:star} we can restrict ourselves to the star graph at the network creation stage of the problem, and in doing so incur twice as much capacity cost as the optimal capacity cost.
Given that the channel creation cost for the star topology is fixed at $k(p-1)$, the focus of our algorithm in this section is to come up with capacities that should be injected on both ends of each channel as well as a sequence of transaction decisions so as to minimise the capacity and rejection costs.



\subsection{LP for a star}\label{sec:lpstar}
We describe a LP that minimises the capacity cost as well as rejection cost over transactions going between any two leaf nodes in the star. This LP is a relaxation of the general problem, where we allow some transactions to be accepted fractionally.
That is, for a transaction of size $x_i$, the LP does not have to accept the full transaction amount, but can accept some portion $y_i$ such that $0 \leq y_i \leq x_i$. Let us call transactions for which $y_i=x_i$ in the LP solution \emph{fully accepted} transactions.

Let $c$ denote the central node of the star. 
For a leaf node $v$ and $i \in \mathbb{N}$, we use
$C_{v,i}$ and $C_{v,i}'$ to denote the capacity at the channel $(v,c)$ after processing the $i$-th transaction
on the side of $v$ and $c$ respectively.
Let $C_v$ denote the total capacity of the channel $(v,c)$.
We can now formulate the LP as follows:

\begin{align}\label{lp:star}
  \qquad\qquad \text{minimise }&\sum_{u} C_u +  \sum_{i} f(x_i-y_i) + \frac{m(x_i-y_i)}{x_i}\\
  \forall i :& 0 \le y_i \le x_i \notag\\
  \forall (v,u,x_i) :&  C_v, C_u \ge x_i\notag\\
                     &  C_{v,i} = C_{v,i-1} - y_i\notag\\
                     &  C_{v,i}' = C_{v,i-1}' + y_i\notag\\
                     &  C_{u,i}' = C_{u,i-1}' - y_i\notag\\
                     &  C_{u,i} = C_{u,i-1} + y_i\notag
\end{align}

\begin{lemma}
    The total cost of the star LP solution is a lower bound on the total cost of the optimal solution.
\end{lemma}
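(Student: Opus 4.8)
The plan is to take an optimal integral solution to the transaction-selection problem on the star $K_{1,p-1}$ and exhibit from it a feasible point of the LP~\eqref{lp:star} whose objective value is at most the optimal cost; since the LP minimizes, its optimum is then a lower bound. Concretely, let an optimal solution be given by the capacities $C_v^{\mathrm{OPT}}$ on each channel $(v,c)$ together with a set $A$ of accepted transactions. For each transaction index $i$, set $y_i = x_i$ if $i \in A$ and $y_i = 0$ otherwise. Since $0 \le x_i$, the constraint $0 \le y_i \le x_i$ holds. Set $C_v = C_v^{\mathrm{OPT}}$, and define the running balances $C_{v,i}, C_{v,i}'$ by the LP recurrences starting from the initial split used by the optimal solution. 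I would then argue feasibility of the remaining constraints.

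The key step is checking $C_v, C_u \ge x_i$ for every transaction $(v,u,x_i)$. In the integral optimum, a transaction between leaves $v$ and $u$ routed through $c$ is accepted only if the forwarding node on channel $(v,c)$ has balance at least $x_i$ and the forwarding node on $(u,c)$ has balance at least $x_i$; both of these balances are bounded above by the respective channel capacities, so $C_v^{\mathrm{OPT}} \ge x_i$ and $C_u^{\mathrm{OPT}} \ge x_i$. For rejected transactions this constraint is vacuous in the integral solution, but the LP still demands it — here I would invoke the observation that the star is connected and any channel actually used for some accepted transaction already satisfies the bound for that transaction's size; more carefully, the LP constraint $C_v, C_u \ge x_i$ is only imposed via transactions, and one must be slightly careful that the LP as written enforces it for \emph{all} transactions, including rejected ones. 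I expect this to be the main obstacle: one either (i) argues that WLOG the optimal solution never rejects a transaction it could have afforded to route without increasing capacity (so the bound holds for free), or (ii) notes that if some transaction's size exceeds the optimal capacities, the LP and the integral problem both must reject it, and the LP objective only charges the rejection penalty $f(x_i - y_i) + m(x_i-y_i)/x_i = f x_i + m$, matching the integral rejection cost exactly.

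With feasibility in hand, the objective comparison is immediate and routine: the capacity term $\sum_u C_u = \sum_u C_u^{\mathrm{OPT}}$ equals the capacity cost of the optimal solution; and for each rejected transaction the LP pays $f(x_i - 0) + m(x_i-0)/x_i = f x_i + m$, exactly the rejection cost $f \cdot x_i + m$ charged in the model, while accepted transactions contribute $0$ to both. Hence the LP objective at this feasible point equals the optimal total cost restricted to capacity and rejection costs, which is all that appears in~\eqref{lp:star}. Therefore $\mathrm{OPT}_{\mathrm{LP}} \le$ (optimal total cost), as claimed. The only subtlety worth spelling out in the final writeup is the direction of the balance recurrences (which leaf is the forwarding node) and confirming that the LP's symmetric treatment of $C_{v,i}, C_{v,i}', C_{u,i}, C_{u,i}'$ correctly tracks that a transaction from $v$ to $u$ decreases $v$'s side and increases $u$'s side on their respective channels.
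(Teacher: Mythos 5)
Your proposal takes essentially the same route as the paper, whose entire proof is the one-line observation that the optimal integer solution is a feasible point of the LP; your write-up just spells out the feasibility check and the objective comparison. The subtlety you flag about the constraint $C_v, C_u \ge x_i$ being imposed even for rejected transactions is a genuine gap in the LP as written that the paper's proof silently skips over, and your resolution (restricting that constraint to accepted transactions, or arguing WLOG about affordable rejections) is a reasonable way to patch it.
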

\begin{proof}
    The proof follows from the observation that the optimal integer solution is also a feasible solution for the LP.
\end{proof}

\begin{remark}\label{rem:fully}
The solution of the LP assumes that the capacity cost and the transaction rejection cost are equally large, i.e., we cannot aggressively optimise for one objective while neglecting the other.
An implication of this which motivates the design of our algorithms in the following sections is that our algorithms should infrequently reject fully accepted transactions, as these are transactions that do not incur any rejection cost and also change the distribution of the capacity along channels in a more optimal way.
\end{remark}

\subsection{An $\calO(p)$ approximation algorithm}\label{sec:detalg}

\smallskip{\em Algorithm for a single channel.} We summarize an algorithm from~\cite{SchmidSY23} that solves the problem for a single channel, which we will call the \emph{channel algorithm}.
The channel algorithm first approximates the capacity to be injected into the channel.
Next, the channel algorithm solves a LP (similar to \Cref{lp:star}).
Given the solution of the LP, we denote by $C_u$ and $C_v$ the minimal capacity needed on each end of a channel $(u,v)$ to mimic the (fractional) transport of transactions as dictated by the LP.
In addition to $C_u$ and $C_v$, the channel algorithm allocates additional capacities on each end of the channel, which we call \emph{reserves}, and denote them by $R_u$ and $R_v$. 
These reserves are mainly used to ensure that all transactions that are fully accepted by the LP are also accepted by the channel algorithm (a consequence of~\Cref{rem:fully}).


\smallskip{\em Extending the channel algorithm to a star.}
We describe our algorithm that extends the channel algorithm over $2$ nodes to a star $K_{1,p-1}$ with central node $c$ in~\Cref{alg:modif}.
The key idea behind our algorithm is to first use the LP as described in~\Cref{lp:star} to fix initial capacities on each end of the channel for every channel in the network (for instance the capacity $C_v$ for $v$ in $(v,c)$ in~\Cref{fig:channeltostar}).
We also provide each leaf with $p-2$ additional reserves (to process transactions for each other leaf). 
For each leaf pair $u$ and $v$, let $M$ denote the smallest change in the capacity as specified by the output of the LP after all interactions between $u$ and $v$.
The additional reserve $R_u^v$ of $u$ for processing transactions of $v$ would then be of size $\frac{\sqrt{3}}{2}M$ (similarly for $v$).
We leave details of~\Cref{alg:modif} to~\Cref{app:channelalg}.

\begin{figure}[htb!]
    \centering
\includegraphics[scale=0.7]{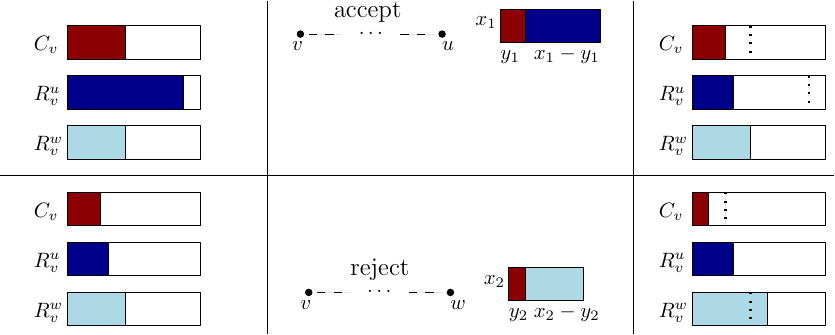}
    \caption{Movement of capacities and reserves from the perspective of node $v$ in a star with centre node $c$ and $2$ other leaf nodes $u$ and $w$. $C_v$ represents the capacity as specified by the output of the LP. $R_v^u$ and $R_v^w$ denote the reserves stored by $v$ to process transactions from $u$ and $w$ respectively. The rightmost column represents the capacities and reserves of $v$ after processing the transaction.}
    \label{fig:channeltostar}
\end{figure}

The next theorem shows that~\Cref{alg:modif} is an $\calO(p)$ approximation of the optimal cost on a star. The proof of~\Cref{thm:pairwise} is in~\Cref{app:proofpairwise}

\begin{algorithm}[t]
  \begin{algorithmic}[1]
    \Require nodes $u$ and $v$, transaction sequence $X_t$, solution of $LP$ indexed by $i$:$C_{u},C_{u}',C_{v},C_{v}'$.
    \Ensure decisions to accept or reject for transactions between $u$ and $v$
    \State $M \gets \min(C_{u,0} + C_{u,0}', C_{v,0} + C_{v,0}')$ \label{algline:reserves1} 
    \State $R_{u} = \frac{\sqrt{3}}{2}M$, $R_{v} = \frac{\sqrt{3}}{2}M$ \label{algline:reserves2}
    \State $H \gets \Call{CreateHeap}{ }$
    \For{$i \in [t]$}
     \If {$x_i$ is transaction from $u$ to $v$}
      \If{$R_u - (x_i - y_i) \ge \frac{\sqrt{3}-1}{2}M$} \label{algline:balstart}
        \State $x_i \gets \textbf{Accept}$
        \State $R_u, R_v = R_u - (x_i-y_i), R_v + (x_i-y_i)$
      \ElsIf{$\frac{y_i}{x_i} < \frac{\sqrt{3}}{\sqrt{3}+1}$} \label{algline:rej}
        \State $x_i \gets \textbf{Reject}$
        \State $R_u, R_v = R_u + y_i, R_v - y_i$
      \Else
        \State $x_i \gets \textbf{Accept}$
        \State $R_u, R_v = R_u - (x_i-y_i), R_v + (x_i-y_i)$
        \State $\Call{AddToHeap}{H,x_i}$
        \If{$R_u < 0$}
            \While{$R_u < \frac{\sqrt{3}-1}{2}M$} \label{algline:forward}
                \State $x_j \gets \Call{PopHeap}{H}$
                \State $x_j \gets \textbf{Reject}$
                \State $R_u, R_v = R_u + x_j, R_v - x_j$
            \EndWhile
        \EndIf
      \EndIf     
      \EndIf
       \If{$R_u \ge \frac{\sqrt{3}-1}{2}M$}
            \State $\Call{EmptyHeap}{H}$
       \EndIf
    \EndFor
  \end{algorithmic}
  \caption{Modified channel algorithm}\label{alg:modif}
\end{algorithm}

\begin{restatable}[]{theorem}{pairwise}
\label{thm:pairwise}
  Given the transaction sequence $X_t$, let $\mathcal{C}$ be the optimal capacity cost and $\mathcal{R}$ the optimal rejection cost on a star. Then~\Cref{alg:modif} run for every pair of nodes
  incurs a rejection cost of at most $(\sqrt{3} + 1)\mathcal{R}$ and capacity cost of at most
  $(1 + (p-1)\cdot \sqrt{3})\mathcal{C}$.
\end{restatable}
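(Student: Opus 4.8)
The plan is to bound the rejection cost and the capacity cost separately, leveraging the fact that \Cref{alg:modif} is run independently on each of the $\binom{p}{2}$ leaf pairs, so the star decomposes into a collection of virtual single channels that interact only through shared reserves at the leaves.

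\textbf{Rejection cost.} First I would argue that for a fixed leaf pair $(u,v)$, the subsequence of transactions exchanged between $u$ and $v$ behaves exactly like an instance of the single-channel problem from~\cite{SchmidSY23}: the quantity $M$ plays the role of the channel's ``slack,'' the reserves $R_u, R_v$ evolve exactly as in the channel algorithm, and the accept/reject thresholds ($\tfrac{\sqrt 3 - 1}{2}M$, and the weak/strong cutoff $\tfrac{\sqrt 3}{\sqrt 3 + 1}$) are inherited verbatim. Hence the correctness and approximation guarantee of the channel algorithm apply: the rejection cost incurred on the $(u,v)$-subsequence is at most $(\sqrt 3 + 1)$ times the optimal rejection cost restricted to that subsequence. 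Summing over all leaf pairs, and using that the optimal solution on the star also induces (at least) the cost of optimally handling each pair's subsequence in isolation — because on a star every $u$--$v$ transaction must traverse the unique path $u,c,v$, so the pair's transactions are handled independently of all other pairs even in the optimum — gives $\sum_{(u,v)} \mathcal{R}_{u,v}^{\mathrm{ALG}} \le (\sqrt 3 + 1) \sum_{(u,v)} \mathcal{R}_{u,v}^{\mathrm{OPT}} = (\sqrt 3 + 1)\mathcal{R}$. The subtlety to verify is that running the channel algorithm on all pairs in interleaved order (rather than one pair at a time) does not change the decisions: since the state maintained for pair $(u,v)$ — namely $R_u$, $R_v$ as restricted to that pair, and the heap — only ever reads and writes transactions between $u$ and $v$, interleaving is harmless.

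\textbf{Capacity cost.} For a leaf $v$, the capacity that \Cref{alg:modif} locks on the channel $(v,c)$ is $C_v + \sum_{u \ne v,c} R_v^u$, where $C_v$ comes from the LP of \Cref{lp:star} and each reserve $R_v^u = \tfrac{\sqrt 3}{2} M_{u,v}$. By \Cref{lemma:lp}/\Cref{lp:star}, $\sum_v C_v$ is a lower bound on the optimal capacity cost, so $\sum_v C_v \le \mathcal{C}$. For the reserves, the key observation is that $M_{u,v}$ — the smallest capacity dictated by the LP during the $(u,v)$-interaction — is itself at most the optimal capacity the optimum must devote to carrying the $u$--$v$ traffic, and there are at most $p-1$ reserves stacked on any single channel. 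More carefully, I would show that for each pair the reserve contribution $\tfrac{\sqrt 3}{2}M_{u,v}$ summed appropriately is bounded by $\sqrt 3$ times the optimal capacity attributable to that pair, and since each channel $(v,c)$ carries $C_v$ plus $p-2$ reserves (one per other leaf), the worst-case blow-up on that channel is $1 + (p-1)\sqrt 3$ times the optimal. Summing over channels yields total capacity cost at most $(1 + (p-1)\sqrt 3)\mathcal{C}$.

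\textbf{Main obstacle.} The crux is the capacity bound, specifically relating $M_{u,v}$ and the reserves to the \emph{global} optimum $\mathcal{C}$ rather than to a per-pair optimum — one must argue that the optimal solution cannot amortize capacity across pairs on a star (which is true because of the unique-path structure), and then that the LP value $C_v$ plus the geometrically-sized reserves telescopes correctly. I would isolate a lemma stating: for each channel $(v,c)$, the optimal solution locks at least $\max_u (\text{per-pair demand}_{u,v})$ there, so dividing the algorithm's $C_v + \sum_u R_v^u$ by this lower bound costs a factor $1 + (p-1)\sqrt 3$. The rejection side is essentially a black-box reduction to~\cite{SchmidSY23}, so the only care needed there is the interleaving argument; I expect that to be routine once the per-pair state separation is made explicit.
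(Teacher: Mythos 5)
Your proposal follows essentially the same route as the paper's own proof: decompose the star into per-pair virtual channels, invoke the single-channel guarantee of~\cite{SchmidSY23} to get the $(\sqrt{3}+1)$ rejection bound per pair, and bound the capacity on each channel by the LP value plus at most $p-1$ reserves, each of size at most $\sqrt{3}$ times the (minimum) LP channel capacity, which yields the $1+(p-1)\sqrt{3}$ factor. The one step I would rephrase is your justification of the per-pair rejection decomposition: the pairs are \emph{not} handled independently in the optimum (all pairs containing leaf $u$ share the balance on the single channel $(u,c)$), so the comparison should instead be routed through the star LP of~\Cref{sec:lpstar} --- whose rejection cost lower-bounds $\mathcal{R}$ and does decompose additively over pairs, and relative to which the channel algorithm's $(\sqrt{3}+1)$ guarantee is actually stated --- after which your argument coincides with the paper's.
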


In the next section, we show that if we impose some reasonable assumptions on the distribution of the input transaction sequence in our model, we can design an algorithm with a significantly lower approximation ratio of $\calO(\sqrt{p})$.

\section{An improved algorithm using the stochastic block model}\label{sec:algcluster}


\smallskip{\em Stochastic block model.}
In reality, users typically interact with each other in a structured manner, resulting in spatial and temporal locality. For example, if two nodes that previously transacted with each other are more likely to transact again in the near future.  
Our model for how nodes interact and send transactions to each other is the stochastic block model~\cite{HOLLAND1983109}, in particular, the planted partition model~\cite{CondonK01}.
The stochastic block model is a generative model to create random graphs where nodes are partitioned into clusters (blocks).
See~\Cref{app:sbm} for more details on the stochastic block model.

\smallskip{\em Clustering.}
We now define the way nodes are partitioned into clusters given an input sequence of transactions. 
Given an input transaction sequence over a network, we say that the nodes are \emph{$(m,k,t)$-clustered} if there exists a partition of nodes
to $m$ clusters $C_1, C_2, \dots, C_m$ where each cluster has size at most $k$ and
the clusters satisfy the following \emph{clustering conditions}:
\begin{enumerate}[leftmargin=0.5cm]
    \item The volume (sum of sizes of transactions) of transactions inside the cluster is $t$-times
    the volume of transactions that go outside the cluster. That is,
    \[
        \sum_{(s_j,t_j,x_j) | s_j,t_j \in C_i  } x_j \ge t \cdot \sum_{\substack{(s_j,t_j,x_j)| s_j \in C_i \text{ and } t_j \not\in C_i \\ \text{or } s_j \not\in C_i \text{ and } t_j \in C_i}} x_j
    \]
    \item For  every $S \subset C_i$ such that $|S| \le \frac{C_i}{2}$,
    the volume of transactions between $S$ and $C_i \setminus S$
    is at least $\frac{|S|}{4|C_i|}$ the volume in the cluster $C_i$. That is,
    \[
        \frac{|S|}{4|C_i|} \sum_{x_j \text{ inside } C_i} x_j \le \sum_{x_j \text{ between } C_i\setminus S \text{ and } S} x_j
    \]
    \item For  every $S \subset C_i$, the ratio of volumes of transactions from $S$ and $C_i \setminus S$
    is at least $\frac{1}{3}$. That is,
    \[
        \frac{\sum_{x_j: C_i \setminus S \rightarrow  S} x_j}{\sum_{x_j: S \rightarrow C_i \setminus S} x_j} \ge \frac{1}{3}.
    \]
\end{enumerate}
These conditions need to be satisfied for every cluster $C_i$ and any contiguous subsequence of transactions involving $C_i$ (i.e., source or target in $C_i$) with volume at least $C_i \cdot x_{max}$,
where $x_{max}$ is the weight of the biggest transaction in the input transaction sequence.
In the rest of this section, we assume that given an input sequence of transactions, the nodes can be $(\sqrt{p},\sqrt{p},24\sqrt{p})$-clustered.
Looking ahead, this allows us to design an algorithm that gives us an $\calO(\sqrt{p})$ approximation of the optimal cost.
In the general case, for an $(m,k,t)$-clustering, for any $t \ge 24\sqrt{p}$, we have an $\calO(m+k)$ approximation (by a similar analysis).

\smallskip{\em Double star.}
We define a double star as a graph of diameter $d=4$ with one central node, some middle nodes, and some leaf nodes.
The central and middle nodes create a star and the leaf nodes are connected to only one middle node, forming a cluster (see~\Cref{fig:doublestar} for an example with $6$ clusters).
In each cluster, the leaf nodes are connected to the middle node in a star topology.
In our setting, given our aforementioned assumption on $(\sqrt{p},\sqrt{p},24\sqrt{p})$-clustering, parties create a double star network topology during the network creation stage, where each cluster represents a group of nodes of size $\sqrt{p}$ that interact frequently with each other.

\begin{figure}[htb!]
    \centering
    \includegraphics[scale=0.6]{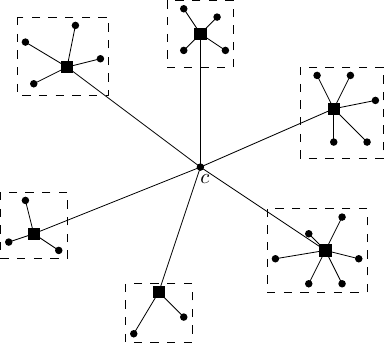}
    \caption{Double star graph with $6$ clusters in the dashed boxes. The node labelled $c$ is the centre node, and the middle nodes are the square nodes.}
    \label{fig:doublestar}
\end{figure}

\subsection{An $\calO(\sqrt{p})$ approximation algorithm}\label{sec:clusteralg}

\smallskip{\em Main ideas.}
We describe our algorithm that gives an $\calO(\sqrt{p})$ approximation of the optimal cost of the problem in~\Cref{alg:clusters}.
On a high level,~\Cref{alg:clusters} consists of two main ideas. 
The first idea is to again formulate and solve an LP on a double star to get the amount of capacities nodes should inject on their ends of their incident links. 
This LP is similar to the LP for a star described in~\Cref{lp:star} and we detail it in~\Cref{app:doubleStarLP}.
From~\Cref{thm:star}, we know that the capacity cost of the double star is at worst a $4$-approximation of the optimal capacity cost.

The second idea is to split the processing of transactions into processing transactions \emph{between} clusters and processing transactions \emph{within} clusters.
Here, we use the specific double star topology to ensure that we can apply~\Cref{alg:modif} (for processing transactions on a single star) to process transactions between clusters as well as within each cluster.
Finally, we use our clustering assumptions to allocate enough reserve capacity for each cluster to ensure that we can always shift the reserve capacities within a cluster from nodes with high to low reserves.
This ensures that each node always has sufficient reserve capacity to align the decisions on their transactions (whether to accept or reject) as closely as possible to the ideal decisions as output from the linear program.

\smallskip{\em Processing transactions between clusters.}
After running the linear program for the double star, let us rewrite each transaction in the sequence as $(s,t,x_i,y_i)$, where $s$ is the source node, $t$ is the target node, $x_i$ is the weight of the transaction, and $y_i$ denotes the fractional amount of the transaction accepted by the solution of the linear program.
For processing transactions between clusters, we look at the star created by the centre and middle nodes of the double star and treat these transactions as going only between these middle nodes. 
In~\Cref{alg:clusters}, this is handled by the function $\Call{BetweenClusters}{}$ which selects only transactions that go between clusters and treats clusters as a single node.
As this becomes processing transactions on a star, this allows us to use~\Cref{alg:modif} to process transactions between different clusters. 
In~\Cref{alg:clusters}, this is handled by the function $\Call{Solve}{X}$ in line~\ref{algline:solvebetween} in~\Cref{alg:clusters} which outputs the decisions using \Cref{alg:modif} for every pair of nodes.
Recall that \Cref{alg:modif} also gives us an upper bound on the \emph{reserve capacity} requirement for each cluster (middle node): for each middle node, we require reserves of size $\calO(\sqrt{p}M)$,
where $M$ is the maximum of the capacity of the channel between the middle node and the centre of the double star $c$ and the capacities of the channels between the middle node and the leaf nodes in its cluster.
We call these reserves \emph{cluster reserves}.

\smallskip{\em Processing transactions within clusters.}
Before processing transactions that go between nodes in a cluster, it is imperative that the cluster reserves are first shifted around within the cluster to ensure each leaf node in the cluster has enough reserves to process their between-cluster transactions (line~\ref{algline:solvebetween} in~\Cref{alg:clusters}).
This is handled by the function $\Call{BalanceInCluster}{}$ (described in~\Cref{alg:transport} in~\Cref{app:balanceincluster}), which modifies the solution of the double star LP such that cluster reserves are transferred from well-funded to depleted nodes.
Finally, after balancing the reserve capacities, \Cref{alg:clusters} treats each cluster as an independent star graph and uses~\Cref{alg:modif} to process transactions within each cluster (function $\Call{Solve}{}$ in line~\ref{algline:solve}). 
Using~\Cref{alg:modif} imposes a reserve capacity requirement of $\calO(\sqrt{p}M')$ times the capacity for every leaf node, where $M'$ is the capacity of the channel between the leaf node and the middle node.

\begin{algorithm}[t]
  \begin{algorithmic}[1]
    \Require Sequence of transactions $X_t$ with $LP$ solution, clustering $\mathcal{C} = \{C_1,C_2,\dots\}$
    \Ensure Decisions to accept or reject transactions
    \State $X_i \gets \Call{BetweenClusters}{X_t,\mathcal{C}}$
    \State $S_i \gets \Call{Solve}{X_i}$ \label{algline:solvebetween}
    \For{$C \in \mathcal{C}$}
        \State $X_C \gets \Call{BalanceInCluster}{X_t, S_i, C}$
        \State $S_C \gets \Call{Solve}{X_C}$ \label{algline:solve}       
    \EndFor
    \State \Return $S_i \cup S_b \cup (\bigcup_{C\in \mathcal{C}}X_C)$
  \end{algorithmic}
  \caption{Approximation algorithm for clustered nodes.}\label{alg:clusters}
\end{algorithm}

\smallskip{\em Properties of balancing reserve capacities.}
We now prove several important properties of transporting reserve capacities within clusters as specified by $\Call{BalanceInClusters}{}$, ensuring the correctness of $\Call{BalanceInClusters}{}$. 
First,~\Cref{lem:whenMove} states that we can modify the solution of the linear program (i.e., the fractional amount of transactions accepted by the linear program) to move the reserves capacities from the source $s$ to target $t$, see \Cref{fig:rcaccept}.
From the statement of~\Cref{lem:whenMove}, we note that the only transactions that do not allow moving reserves are transactions that are fully accepted ($y_i = x_i$) or rejected ($y_i = 0$). Then, we quantify how much capacity we can move from one set of nodes to another in~\Cref{lem:howMuchMove}. Lastly, we show in~\Cref{lem:transport} that we can always transport some reserves between any two sets of nodes.
We defer the proofs of these lemmas to~\Cref{app:proofwhenmove,app:proofhowmuch,app:prooftransport}.

\begin{figure}[htb!]
    \centering
\includegraphics[scale=0.7]{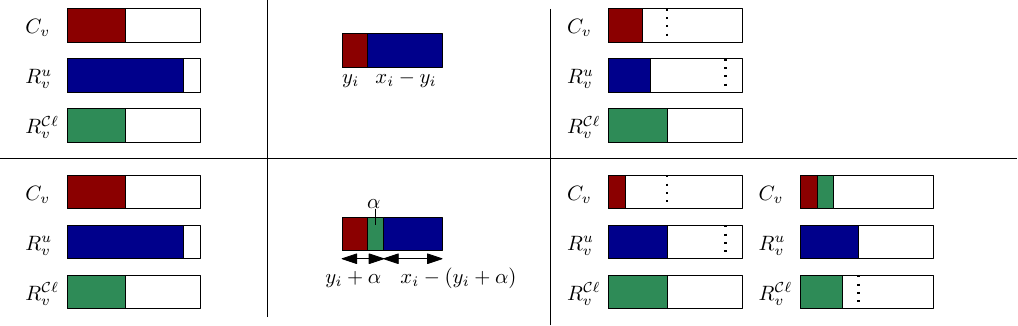}
    \caption{Movement of capacities, reserves and cluster reserves when accepting a transaction $x_i$ from $v$ to $u$ where $v,u \in C_i$ for some cluster $C_i$.
    The case on the top represents the typical movement of capacities and reserves when accepting $x_i$. The case on the bottom represents the treatment of the reserves (and cluster reserves) when the change in cluster reserves ($R_{v}^{\mathcal{C\ell}}$) is involved.
    The highlighted reserves are on $v$'s side of the channel between $v$ and the middle node. Similar changes are made on $u$'s side of the channel between $u$ and the middle node}
    \label{fig:rcaccept}
\end{figure}

\begin{restatable}[]{lemma}{whenmove}
\label{lem:whenMove}
    If there is a transaction $(s,t,x_i,y_i)$, we can change it to
    \begin{itemize}[leftmargin=0.4cm]
        \item $(s,t,x_i,y_i + \alpha)$ to simulate moving $\alpha$ reserves from $s$ to $t$, and
        \item $(s,t,x_i,y_i - \alpha)$ to simulate moving $\alpha$ reserves from $t$ to $s$.
    \end{itemize}
\end{restatable}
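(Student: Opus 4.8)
The plan is to argue directly from the capacity-update equations of the double-star LP (the analogue of \Cref{lp:star}) that perturbing the accepted fraction $y_i$ of a single transaction $(s,t,x_i,y_i)$ by $\pm\alpha$ has exactly the effect of shifting $\alpha$ units of reserve capacity between $s$ and $t$ while leaving every other node's capacities and the LP feasibility intact. First I would recall the relevant LP constraints on the star formed by $s$, the middle node, and $t$ (through the centre $c$): accepting $y_i$ of the transaction decreases $C_{s,i}$ by $y_i$, increases $C_{s,i}'$ by $y_i$, and makes the mirror-image changes on $t$'s channel. Replacing $y_i$ by $y_i+\alpha$ therefore subtracts an extra $\alpha$ from $C_{s,i}$ (and every later $C_{s,j}$), adds an extra $\alpha$ to $C_{s,i}'$, adds $\alpha$ to $C_{t,i}$, and subtracts $\alpha$ from $C_{t,i}'$. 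Interpreting $C_{v}-C_{v,i}$-type slack as the reserve that node $v$ holds, this is precisely ``$\alpha$ reserves moved from $s$ to $t$''; the symmetric statement for $y_i-\alpha$ is immediate by negating $\alpha$.

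The key steps, in order, are: (1) write out the four capacity recurrences affected by transaction $i$ and observe they are affine in $y_i$ with coefficients $\pm 1$; (2) show that the total channel capacities $C_s, C_t$ (and all capacities of uninvolved nodes) are unchanged by the perturbation, since the increments to the two sides of each affected channel cancel; (3) verify that the perturbed solution still satisfies $0 \le y_i+\alpha \le x_i$ and the per-transaction capacity lower bounds $C_v \ge x_j$ — this is where one needs $\alpha$ small enough and where the hypothesis implicitly restricts to transactions with $0 < y_i < x_i$ (a fully accepted or fully rejected transaction cannot absorb a perturbation in one of the two directions); and (4) conclude that the net bookkeeping effect on the reserve pools of $s$ and $t$ is exactly $\mp\alpha$ and $\pm\alpha$, matching \Cref{fig:rcaccept}.

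The main obstacle I expect is step (3): making precise what ``reserve'' means in the double-star LP solution and checking that the perturbed solution remains feasible for the whole transaction sequence, not just at index $i$. Since $C_{s,i}$ appears in the recurrence for all later $C_{s,j}$, the shift by $\alpha$ propagates forward, so one must argue that the reserve allocated to $s$ (of size $\tfrac{\sqrt3}{2}M$ or the cluster-reserve bound $\calO(\sqrt p M)$) is large enough to absorb it for the range of $\alpha$ the later lemmas (\Cref{lem:howMuchMove}, \Cref{lem:transport}) will actually invoke — in other words, this lemma is really a statement about the existence of the perturbation for \emph{sufficiently small} $\alpha$, and the quantitative bound on how large $\alpha$ may be is exactly what \Cref{lem:howMuchMove} supplies. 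I would therefore prove \Cref{lem:whenMove} as the qualitative ``direction of movement'' statement and defer all magnitude accounting to the next lemma.
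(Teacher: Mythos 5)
Your proposal is sound, but there is nothing in the paper to compare it against: the text printed under ``Proof of \Cref{lem:whenMove}'' in the appendix is a verbatim duplicate of the proof of \Cref{thm:pairwise} (it talks about $p-1$ connections, $\sqrt{3}$ reserves, and the $(\sqrt{3}+1)\mathcal{R}$ bound) and does not address the lemma at all --- evidently a copy-paste error. Your argument is the natural one the paper intends: the four capacity recurrences of \Cref{lp:star} (and its double-star analogue) are affine in $y_i$ with coefficients $\pm 1$, so replacing $y_i$ by $y_i\pm\alpha$ shifts exactly $\alpha$ between the two sides of each of the two affected channels for every index $j\ge i$, leaves the totals $C_s$, $C_t$ and all uninvolved nodes untouched, and is absorbed in the algorithm's bookkeeping by debiting $s$'s reserve and crediting $t$'s (which is precisely what \Cref{fig:rcaccept} depicts). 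Your step (3) correctly isolates the one substantive caveat --- the perturbation is only feasible in a given direction when $y_i<x_i$ (resp.\ $y_i>0$) and the downstream balances have slack --- and this matches the paper's own remark immediately after the lemma that fully accepted and fully rejected transactions are exactly the ones that cannot move reserves. Deferring all magnitude accounting to \Cref{lem:howMuchMove} and \Cref{lem:transport} is also consistent with how the lemma is invoked. The one thing I would ask you to make explicit rather than leave as an expectation is the forward-propagation feasibility check itself: state that the perturbed solution satisfies $C_{s,j}\ge 0$ and $C_{t,j}'\ge 0$ for all $j\ge i$ because the $\alpha$ drawn from $s$ is charged against its reserve pool rather than against the LP balances, so LP feasibility is preserved by construction for any $\alpha$ not exceeding the available reserve.
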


\begin{restatable}[]{lemma}{howmuch}
\label{lem:howMuchMove}
    Let $S$ be a set of nodes that is connected to $C_i\setminus S$ for some cluster $C_i$ by volume
    of at least $6|S|\cdot M$. Then at least $|S|M$ volume can be moved from $S$ to $C_i \setminus S$.
\end{restatable}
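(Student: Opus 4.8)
The plan is to realise the required reserve transfer transaction by transaction via \Cref{lem:whenMove}, and then to sum the per-transaction contributions over the cut $(S,\,C_i\setminus S)$.

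First I would set up the movement budget. Consider the transactions whose source and target lie on opposite sides of this cut. By \Cref{lem:whenMove}, a transaction $(s,t,x_i,y_i)$ with $s\in S$ and $t\in C_i\setminus S$ can be replaced by $(s,t,x_i,y_i+\alpha)$ for any $0\le\alpha\le x_i-y_i$, which moves $\alpha$ reserves from $S$ to $C_i\setminus S$; symmetrically, a transaction with $s\in C_i\setminus S$ and $t\in S$ can be replaced by $(s,t,x_i,y_i-\alpha)$ for any $0\le\alpha\le y_i$, again moving $\alpha$ reserves from $S$ to $C_i\setminus S$. Since these modifications act on distinct transactions, performing all of them at once moves a total of
\[
  T \;:=\; \sum_{\substack{(s,t,x_i,y_i):\\ s\in S,\ t\notin S}}(x_i-y_i)\;+\;\sum_{\substack{(s,t,x_i,y_i):\\ s\notin S,\ t\in S}}y_i
\]
reserves from $S$ to $C_i\setminus S$, so it is enough to prove $T\ge |S|\,M$.

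Next I would use the clustering structure to control the crossing volumes. Write $A$ for the total weight of transactions going from $S$ into $C_i\setminus S$ and $B$ for the total weight going the other way, so the hypothesis reads $A+B\ge 6|S|\,M$. Applying the third clustering condition (the ratio of the two crossing volumes is at least $\tfrac13$) once to $S$ and once to $C_i\setminus S$ gives $A\le 3B$ and $B\le 3A$, hence $A,B\ge\tfrac{A+B}{4}\ge\tfrac{3}{2}|S|\,M$; in particular both sides of the cut carry transactions. I would also record the ``opposite'' budget $T':=\sum_{s\in S,\,t\notin S}y_i+\sum_{s\notin S,\,t\in S}(x_i-y_i)$, which satisfies $T+T'=A+B\ge 6|S|\,M$.

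The hard part is that $T$ on its own is not pinned down by $A+B$: an adversarial LP solution could fully accept every transaction leaving $S$ and fully reject every one entering $S$, forcing $T=0$ and $T'=A+B$ (the excerpt already remarks that fully accepted or fully rejected transactions carry no reserve slack). I expect the intended fix is to transfer not the raw LP-capacity but the cluster reserves attached to the crossing transactions (cf.\ \Cref{fig:rcaccept}): each crossing transaction of weight $x_i$ carries $\Theta(x_i)$ units of cluster reserve that can be pushed toward $C_i\setminus S$ irrespective of its direction or of $y_i$, so the total movable amount is at least a fixed constant fraction of $A+B$, with the constants chosen so that this fraction is at least $\tfrac16$, yielding the bound $|S|\,M$. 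The remaining work is routine bookkeeping: checking that each shift keeps every $y_i\in[0,x_i]$, that it does not conflict with the decisions already fixed by \textsf{Solve} on the between-cluster star, and that the modified instance is still a legal input to \Cref{alg:modif} inside every cluster.
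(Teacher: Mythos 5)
Your setup coincides with the paper's: the movable budget is exactly $T=\sum_{\mathrm{out}}(x_i-y_i)+\sum_{\mathrm{in}}y_i$, and applying clustering condition~3 to both sides of the cut shows each crossing direction carries a constant fraction of the $6|S|M$ crossing volume (your $\tfrac32|S|M$ is the honest consequence of the ratio-$\tfrac13$ condition; the paper asserts $2|S|M$ here, which is what its final arithmetic actually needs). You also correctly identify the apparent obstruction: an LP solution that fully accepts every transaction leaving $S$ and fully rejects every one entering $S$ would force $T=0$. Where the proposal breaks is your resolution of that obstruction. There is no mechanism in the model by which a crossing transaction carries ``$\Theta(x_i)$ units of cluster reserve'' that can be pushed across the cut irrespective of its direction or of $y_i$; \Cref{lem:whenMove} only permits perturbing $y_i$ within $[0,x_i]$, so the per-transaction movable amount really is $x_i-y_i$ or $y_i$ and nothing more. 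The paragraph invoking \Cref{fig:rcaccept} is an invented mechanism, not a proof step.

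The missing idea is that the adversarial scenario you fear is ruled out by LP feasibility, and this is the entire content of the paper's proof. Suppose for contradiction that only $m<|S|M$ can be moved, and let $m'=\sum_{\mathrm{in}}y_i$ be the incoming part of that budget. Then the accepted outgoing volume is $A-(m-m')$ and the accepted incoming volume is $m'$, so over this subsequence the LP solution drains a net balance of at least $A-(m-m')-m'=A-m>|S|M$ out of $S$ (using $A\ge 2|S|M$). By pigeonhole some single node of $S$ loses more than $M$ of balance over the subsequence, which is impossible: $M$ is the largest channel capacity in $C_i$, and a feasible LP solution cannot change any node's balance by more than its channel capacity. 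In other words, the closer the LP solution comes to making $T$ small, the more actual balance it must ship out of $S$, and capacity feasibility caps that at $|S|M$ total; no auxiliary reserve-carrying mechanism is needed. Your proof as written does not close, because its final step rests on a transfer rule that the model does not provide.
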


\begin{restatable}[]{lemma}{transport}
\label{lem:transport}
Let $X_t'$ be a subsequence of transactions where transactions incident to $C_i$ 
    have volume at least $tr \ge M$, then \Cref{alg:transport} can transport
    at least $tr$ from any set of nodes $S\subset C_i$ to any other set $T\subset C_i$.
    The LP solution is modified by volume at most $\calO(\sqrt{p}\cdot tr)$.
\end{restatable}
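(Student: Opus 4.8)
The plan is to combine Lemma \ref{lem:whenMove} and Lemma \ref{lem:howMuchMove} with a potential/greedy argument over a carefully chosen sequence of "reserve-moving" operations, using the clustering conditions to guarantee that the bottleneck never vanishes. Here is the structure I would follow.

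First, I would reduce the general "move $tr$ from $S$ to $T$" request to a sequence of elementary single-step moves justified by Lemma \ref{lem:whenMove}: each move takes a transaction $(s,t,x_i,y_i)$ with $0 < y_i < x_i$ and nudges $y_i$ up or down by some $\alpha$, shifting $\alpha$ reserve from one endpoint to the other. The key observation is that reserve capacity is a conserved quantity that flows along edges of the cluster's star (leaf--middle--leaf), so moving reserve from $S$ to $T$ can be decomposed into moving reserve from $S$ to the middle node and then from the middle node to $T$; hence it suffices to prove the claim for $T = C_i \setminus S$ (and by symmetry for $S = C_i \setminus T$), which is exactly the form Lemma \ref{lem:howMuchMove} is stated in.

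Next I would run an amortized argument. We want to transport a total of $tr \ge M$; Lemma \ref{lem:howMuchMove} lets us move $\ge |S| M$ in one "round" as long as the cut between $S$ and $C_i \setminus S$ carries volume $\ge 6|S| M$. Using clustering condition (2) — for any $S$ with $|S| \le |C_i|/2$ the cut volume is at least $\tfrac{|S|}{4|C_i|}$ times the in-cluster volume — together with the hypothesis that the in-cluster volume on the relevant subsequence $X_t'$ is large (at least $|C_i| \cdot x_{\max}$, as required for the clustering conditions to apply, and in fact we need it proportional to $\sqrt p \cdot tr$ to feed the accounting), I would verify that the $6|S| M$ threshold of Lemma \ref{lem:howMuchMove} is met. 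Iterating over rounds and/or over a partition of $S$ into pieces, the number of elementary moves and the total perturbation of the LP solution accumulate to $\calO(|C_i| \cdot tr) = \calO(\sqrt p \cdot tr)$, since $|C_i| \le \sqrt p$. Clustering condition (3), bounding the ratio of "forward" to "backward" volume on any cut by a constant, is what prevents the perturbation from being forced to cascade disproportionately in one direction — I would use it to control the total $y_i$-perturbation rather than just the net flow.

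The main obstacle I anticipate is the bookkeeping that ties together three quantities at once: the amount of reserve actually delivered, the number/total size of $y_i$-adjustments made (which is the "$\calO(\sqrt p \cdot tr)$ modification to the LP solution"), and the guarantee that we never get stuck because every transaction on the chosen cut is already saturated at $y_i = 0$ or $y_i = x_i$. Ruling out the stuck case is precisely where the clustering conditions must be invoked quantitatively: one has to show that on a subsequence with enough in-cluster volume, a constant fraction of that volume sits on transactions with $0 < y_i < x_i$ and on the right side of the desired cut, so that Lemma \ref{lem:howMuchMove}'s $6|S| M$ hypothesis is genuinely available at every round. I would handle this by choosing $S$ (or its sub-pieces) adaptively — always splitting off the half with more available slack — so that $|S| \le |C_i|/2$ stays valid and condition (2) keeps firing, and then a geometric-series bound on the per-round cost closes out the $\calO(\sqrt p \cdot tr)$ estimate.
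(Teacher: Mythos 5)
Your setup matches the paper's first step: using clustering condition 1 (with $t = 24\sqrt{p}$ and $|C_i| \le \sqrt{p}$) the in-cluster volume on $X_t'$ is at least $24\, tr\, |C_i|$, condition 2 then gives a cut volume of at least $6\, tr\, |S| \ge 6|S|M$ next to $S$, and \Cref{lem:howMuchMove} yields that at least $|S|M \ge tr$ can leave $S$. The gap is in how you get that reserve to the \emph{specific} target set $T$. Your proposed reduction --- move reserve from $S$ to the middle node and then from the middle node to $T$ --- is not supported by the elementary operation of \Cref{lem:whenMove}: adjusting $y_i$ on a transaction $(s,t,x_i,y_i)$ shifts reserve between the two transaction endpoints $s$ and $t$ (both leaves of the cluster star in general), with the middle node acting only as a conduit whose balances on its two incident channels change in opposite directions and net out. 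There is no operation that deposits reserve \emph{at} the middle node, so the claim that it suffices to treat the case $T = C_i \setminus S$ does not follow; \Cref{lem:howMuchMove} only guarantees the reserve lands \emph{somewhere} in $C_i \setminus S$, possibly nowhere near $T$ in the graph whose edges are slack-carrying transactions.

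The paper closes this gap with a bidirectional reachability argument that your proposal is missing: set $S_0 = S$ and let $S_j$ be the nodes reachable from $S_{j-1}$ by a nonzero transport, and symmetrically $T_0 = T$ with $T_j$ the nodes from which a nonzero amount can be pushed into $T_{j-1}$. Clustering condition 2 forces both families to keep growing while they occupy at most half the cluster, so they must intersect, which exhibits a transport path from $S$ to $T$ of length at most $|C_i| \le \sqrt{p}$; one then pushes the bottleneck amount $mn$ along this path (this is exactly what \Cref{alg:transport} does), observes that the clustering inequalities survive after subtracting $mn$, and repeats. The $\calO(\sqrt{p}\cdot tr)$ bound on the LP modification then falls out because each unit of the $tr$ transported perturbs at most $|C_i| \le \sqrt{p}$ transactions along its path. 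Your amortized, adaptive-splitting sketch aims at the same bound and correctly anticipates the ``stuck'' issue, but it never constructs a path from $S$ to $T$, which is the heart of the lemma.
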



We now have the requisite ingredients to state and prove (in~\Cref{app:proofmain}) the main theorem of this section.

\begin{restatable}[]{theorem}{main}
\label{thm:main}
    \Cref{alg:clusters} is $\calO(\sqrt{p})$ approximation of the problem where given a transaction sequence, the nodes are $(\sqrt{p},\sqrt{p}, 24\sqrt{p})$-clustered.
\end{restatable}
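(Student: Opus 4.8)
The plan is to bound separately the three cost components---channel creation, capacity, and rejection---and show each is $\calO(\sqrt{p})$ times the optimum of the clustered instance. The channel creation cost is immediate: the double star on $p$ nodes with $\sqrt{p}$ clusters has $\calO(p)$ edges, so its creation cost is $\calO(kp)$; I would compare this against the fact that any feasible solution that accepts a nonzero volume of transactions must create $\Omega(1)$ channels, and more carefully against the optimum's own structure via \Cref{thm:star} (diameter $4$), so that the creation cost is absorbed into the overall $\calO(\sqrt{p})$ bound. For the capacity cost, \Cref{thm:star} with $d=4$ already gives that the LP-optimal capacity on the double star is within a factor $4$ of the optimal capacity on any graph; the only extra capacity \Cref{alg:clusters} injects are the reserves and cluster reserves, so I must show these total $\calO(\sqrt{p})$ times the capacity cost. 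By \Cref{thm:pairwise} applied to the star of middle nodes, the cluster reserves on each middle node are $\calO(\sqrt{p}M)$ where $M$ is a channel capacity incident to that node; summing over the $\sqrt{p}$ middle nodes gives $\calO(\sqrt{p})\cdot\mathcal{C}$. Likewise, within each cluster of size $\sqrt{p}$, \Cref{thm:pairwise} gives leaf reserves of $\calO(\sqrt{p}M')$ per leaf, and summing over all $p$ leaves (organised into $\sqrt{p}$ clusters, each contributing $\calO(\sqrt{p})$ times its internal capacity) again yields $\calO(\sqrt{p})\cdot\mathcal{C}$.

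The rejection cost is the crux and splits into three sources: rejections made by $\Call{Solve}{}$ on the between-cluster star, rejections made by $\Call{Solve}{}$ within each cluster, and the rejections forced by $\Call{BalanceInCluster}{}$ when it rewrites the LP solution to shuffle reserves around. For the first, \Cref{thm:pairwise} bounds the between-cluster rejection cost by $(\sqrt{3}+1)$ times the optimal rejection cost restricted to between-cluster transactions, hence $\calO(1)\cdot\mathcal{R}$. For the second, I would argue cluster-by-cluster: within cluster $C_i$, once $\Call{BalanceInCluster}{}$ has redistributed the cluster reserves, each leaf node has enough reserve so that \Cref{alg:modif} on the $\sqrt{p}$-node star $C_i$ incurs rejection cost $(\sqrt{3}+1)$ times the in-cluster optimum; summing over the $\sqrt{p}$ clusters gives $\calO(1)\cdot\mathcal{R}$, since the in-cluster optima are disjoint pieces of the global rejection cost. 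For the third and hardest source, I would invoke \Cref{lem:whenMove}, \Cref{lem:howMuchMove} and \Cref{lem:transport}: \Cref{lem:transport} says transporting volume $tr$ between node sets within a cluster perturbs the LP solution by volume $\calO(\sqrt{p}\cdot tr)$, and the total volume that needs to be transported is governed by the between-cluster reserve movements, which by the clustering condition~(1) (in-cluster volume is $\ge 24\sqrt{p}$ times out-of-cluster volume) is at most a $\calO(1/\sqrt{p})$ fraction of the in-cluster volume. Multiplying the $\calO(\sqrt{p})$ blow-up of \Cref{lem:transport} by this $\calO(1/\sqrt{p})$ fraction gives an $\calO(1)$ overhead, and the per-transaction rejection cost of a perturbed amount is bounded linearly, so the balancing contributes $\calO(1)\cdot(\mathcal{R}+\mathcal{C})$.

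The main obstacle is the rejection-cost accounting for $\Call{BalanceInCluster}{}$, specifically verifying that the clustering conditions~(2) and~(3) are exactly what is needed so that \Cref{lem:transport} applies at every point in the sequence where a transport is required---i.e., that every contiguous block of in-cluster transactions of volume $\ge |C_i|\cdot x_{\max}$ admits a large enough cut to route the reserves (condition~(2)) and that the directional balance needed to avoid over-depleting one side is maintained (condition~(3)). I would handle this by a careful amortised argument over the transaction sequence: I partition the sequence into blocks in which the pending reserve imbalance never exceeds $\Theta(\sqrt{p}M)$, charge each transport to the block that created the imbalance, and use \Cref{rem:fully} to ensure that fully accepted transactions are essentially never disturbed, so the perturbations only touch transactions that already carry a nonzero $x_i - y_i$ and hence already contribute to $\mathcal{R}$. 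Assembling the three bounds, the total cost of \Cref{alg:clusters} is $\calO(\sqrt{p})\cdot(\mathcal{C}+\mathcal{R}) + \calO(1)\cdot(\text{creation optimum})$, which is $\calO(\sqrt{p})$ times the optimum.
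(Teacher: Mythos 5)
Your proposal is correct and follows essentially the same route as the paper's proof: a factor-$4$ capacity bound from \Cref{thm:star} on the diameter-$4$ double star, \Cref{thm:pairwise} applied first to the star of middle nodes and then within each cluster, and \Cref{lem:transport} to account for the perturbation caused by reserve balancing. The only deviation is your attempt to sharpen the balancing overhead to $\calO(1)$ via clustering condition~(1); the paper simply charges it as $\calO(\sqrt{p})$ times the between-cluster rejection cost, which already suffices for the $\calO(\sqrt{p})$ bound, so nothing hinges on that refinement.
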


We also get the following corollary as a direct consequence of the proof of~\Cref{thm:main}.
\begin{corollary}
    Given $(m,k,t)$-clustered transaction sequence where $t \ge 24\sqrt{p}$,
    \Cref{alg:clusters} is $\calO(\max(m,k))$-approximation of the problem.
\end{corollary}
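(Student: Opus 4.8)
I would prove \Cref{thm:main} directly (the corollary then follows by the same argument with $m$ clusters of size $\le k$ in place of $\sqrt{p}$ clusters of size $\sqrt{p}$). The plan is to bound each of the three cost types of \Cref{alg:clusters} — channel creation, capacity, rejection — separately against $\mathrm{OPT}$, the optimal total cost over all topologies, using the double‑star LP relaxation as an intermediary. First I would argue that the double‑star LP optimum is at most $4\cdot\mathrm{OPT}$: take the globally optimal solution, keep its accept/reject decisions, route its accepted transactions on the double star, and invoke \Cref{thm:star} with $d=4$; this inflates the capacity cost by at most a factor $4$, leaves the rejection cost unchanged, and does not increase the channel‑creation cost since the double star is a spanning tree with the minimum $p-1$ channels. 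The resulting integral double‑star solution is feasible for the double‑star LP, so its optimum only decreases. Hence it suffices to bound \Cref{alg:clusters}'s cost by $\calO(\sqrt{p})$ times the double‑star LP optimum.

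Next I would split the transaction sequence into its between‑cluster part and, for each cluster $C$, its within‑cluster part, and analyse $\Call{Solve}{}$ on each. Contracting every cluster to its middle node turns the between‑cluster problem into transaction selection on a star with $\sqrt{p}$ leaves, so \Cref{thm:pairwise} gives a rejection cost at most $(\sqrt{3}+1)$ times, and a (reserve‑inclusive) capacity cost at most $\bigl(1+(\sqrt{p}-1)\sqrt{3}\bigr)=\calO(\sqrt{p})$ times, the corresponding pieces of the double‑star LP cost; it also certifies that each middle node needs only $\calO(\sqrt{p}\,M)$ of cluster reserve, with $M$ bounded by the LP capacities on that cluster's channels. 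Symmetrically, once the cluster reserves have been redistributed so that every leaf meets the threshold demanded by \Cref{alg:modif}, each cluster is itself a star with at most $\sqrt{p}$ leaves, so a second application of \Cref{thm:pairwise} bounds the within‑cluster rejection cost by $\calO(1)$ and the within‑cluster capacity cost by $\calO(\sqrt{p})$ times the relevant LP pieces. Since the between‑ and within‑cluster LP pieces add up to the whole, the algorithm's total capacity cost (including all reserves and cluster reserves) is $\calO(\sqrt{p})$ times the double‑star LP capacity, hence $\calO(\sqrt{p})\cdot\mathrm{OPT}$, and its ``raw'' rejection cost is $\calO(1)\cdot\mathrm{OPT}$.

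The remaining, and most delicate, ingredient is charging the reserve redistribution done before the within‑cluster runs. The between‑cluster decisions force each leaf of a cluster to have absorbed or released some cluster reserve, and \Cref{alg:transport} must move these reserves so that no leaf is starved. Here I would use \Cref{lem:whenMove} to see that reserves are moved purely by perturbing the fractional acceptances $y_j$ of within‑cluster transactions, \Cref{lem:howMuchMove} to convert an edge‑expansion guarantee into a quantified transportable amount, and \Cref{lem:transport} to carry $tr$ units of reserve between arbitrary node sets of a cluster at a cost of $\calO(\sqrt{p}\cdot tr)$ in LP perturbation. The clustering conditions are exactly what makes this affordable: condition~1 with the blow‑up factor $24\sqrt{p}$ forces the within‑cluster volume to dominate the between‑cluster volume by a large enough margin that the $\calO(\sqrt{p}\,M)$ of reserve to be shuffled per cluster can be moved while perturbing the LP by only an $\calO(\sqrt{p})$ fraction of the between‑cluster rejection cost, while conditions~2 and~3 (edge expansion and near‑balanced directionality inside every sufficiently long contiguous sub‑sequence of a cluster) guarantee that the cuts needed by \Cref{lem:howMuchMove,lem:transport} always carry enough volume. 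The main obstacle is precisely this bookkeeping: one must verify that perturbing the $y_j$'s to balance reserves keeps the capacity and nonnegativity constraints used by the within‑cluster run of \Cref{alg:modif} satisfied, that the numerical constants ($\sqrt{3}$, $\tfrac{\sqrt{3}-1}{2}$, the $6$ of \Cref{lem:howMuchMove}, and the $24$ of the clustering hypothesis) chain together consistently, and that the extra rejection incurred by the perturbation is genuinely $\calO(\sqrt{p})\cdot\mathrm{OPT}$ and not proportional to the within‑cluster volume. Granting this, summing the channel‑creation, capacity, and (raw plus redistribution) rejection bounds yields total cost $\calO(\sqrt{p})\cdot\mathrm{OPT}$, and the corollary is obtained by tracking $m$ and $k$ through the two applications of \Cref{thm:pairwise}.
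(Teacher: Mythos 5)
Your proposal follows essentially the same route as the paper: bound the double-star LP against $\mathrm{OPT}$ via \Cref{thm:star} with $d=4$, apply \Cref{thm:pairwise} once to the contracted between-cluster star and once per cluster, charge the reserve redistribution via \Cref{lem:whenMove,lem:howMuchMove,lem:transport}, and obtain the corollary by substituting $m$ leaves and clusters of size at most $k$ into the two applications of \Cref{thm:pairwise}. Your write-up is, if anything, more explicit than the paper's about the bookkeeping that must be verified in the reserve-redistribution step.
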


Finally, we end this section with two observations. The first observation effectively states that our algorithm just needs to focus on the total volume of partially accepted transactions and can ignore the volume of fully accepted and rejected transactions.
The second observation allows us to further reduce the approximation ratio if we assume clustering of a larger depth.

\smallskip{\em Practical relaxations.}
We observe that the proof of~\Cref{thm:main} requires a condition on the partially accepted transactions
(these are transactions which satisfy $0 < y_i < x_i$).
However, we stress that the volume of fully accepted or rejected transactions can be arbitrary.
Moreover, in every cluster, the middle node can interact with other clusters arbitrarily and this also does not affect the proof of~\Cref{thm:main}.

\smallskip{\em Hierarchical clustering.}
We can use a similar idea to further reduce the approximation ratio if we assume that within each cluster the clusters themselves are also clustered.
Namely, we first treat the top-level clusters as nodes and solve the problem as outlined by \Cref{alg:modif}.
Then, we have some transactions pre-accepted or pre-rejected and we now shift the focus to the next level clusters.
If these next-level clusters consist of other clusters, we treat them again as nodes.
We do this recursively until we reach the lowest-level clusters and then use~\Cref{alg:clusters} to solve the problem within the lowest-level clusters.
Note that we require the strength of the clustering to be bigger in order to achieve a similar approximation ratio in this setting.
Nevertheless, assuming the nodes satisfy sufficiently strong clustering conditions, we can achieve an approximation ratio of $\calO(p^{\frac{1}{3}})$ for two nested clusterings, and, in general, $\calO(p^{\frac{1}{k+1}})$ for $k$ nested clusterings (if the nodes in clusters are well distributed).
We stress that although hierarchical clustering allows us to achieve a stronger approximation factor, it also comes with additional structural assumptions. However, we believe there are strong use cases of hierarchical clustering, for instance using a payment channel network to handle payments among users in a large company. The clusters would naturally follow the clustering of users into divisions and sub-divisions, with the head of each division being the centre of the star in their corresponding cluster.

\section{Case study: the Lightning Network}\label{sec:eval}

To justify our transactions distribution assumption in~\Cref{sec:algcluster}, we perform an empirical case study of the largest and most commonly used PCN: Bitcoin's Lightning Network. Specifically, the research question we want to investigate is the following: 

\smallskip
\emph{Are transactions in the Lightning Network clustered according to the clustering conditions stated in~\Cref{sec:algcluster}?}
\smallskip

To address this question, we obtained the latest snapshot (dated September 2023) of the Lightning Network from the Lightning Network gossip repository~\cite{lngossip}. 
We further restrict our examination to the largest connected component which contains $20,348$ users and $310,657$ channels.
\Cref{fig:degree_histogram} shows the distribution of degrees in the largest connected component, where a large number of nodes ($\sim 10\%$) are of degree $\leq 2$.

\smallskip{\em Estimating the source and target of transactions.}
The main challenge in extracting transactions data from the Lightning Network is that the Lightning Network, by design, only provides information about the costs of transactions across channels, but does not provide information about transactions between users.
Nevertheless, it is reasonable to assume that transactions between users are related to the channel capacities that connect them. 
Furthermore, we are mainly concerned with finding out if the clustering conditions hold for transactions between ordinary users with low degrees rather than between, e.g., large payment hubs with high degrees.
We hence propose the following methodology: we first remove high degree nodes (i.e., nodes with degree $\geq 3$) in the network and connect their neighbours to each other. Let us denote this resulting reduced network by $G$. In $G$, we assume nodes that are directly connected to each other transact with each other.
Our methodology of generating the reduced network $G$ is detailed in~\Cref{alg:users-network}.

\begin{algorithm}[htb!]
    \caption{Construct New Graph \( G \) from Graph \( H \)}
    \begin{algorithmic}[1]
        \Require Graph \( H = (V, E) \), $cost_H : E \rightarrow \mathbb{R}_{\geq 0}$ 
        \Ensure Graph $G$ of users and $vol$ assigning volume to its edges.
        \State Find the largest strongly connected component \( C \subseteq H \)
        \State \( users \gets \{v \in C \mid 1 \leq \deg(v) \leq 2\} \)
        \State  \( hubs \gets V \setminus users \)
        \State \( E_{users} \gets \{ \{u, v\} \mid u, v \in users \text{ and } N[u] \cap N[v] \cap hubs \neq \emptyset \} \cup E(G[users]) \)
        \State \( G \gets (users, E_{users}) \)
        \State \( cost_G(u, v) \gets cost_H(u, v) \) for all $\{u, v\} \in E_{users}$.
        \State \( cost_G(u, v) \gets \infty \) for all $u, v \in users$ and $\{u, v\} \notin E_{users}$.
        \ForAll{$v \in hubs$}
            \For{$s_1, s_2 \in users \cap N(v)$}
                \State $cost_G(s_1, s_2) = \min(cost_G(s_1, s_2), cost_H(s_1, v) + cost_H(v, s_2))$
            \EndFor
        \EndFor
        \State \( vol(u, v) \gets 1 / (cost_G(u, v) + 1) \) for all $u, v \in S$.
        \State \Return \( G \) and $vol$
    \end{algorithmic}
  \caption{Algorithm for estimating the communication network of regular users.}\label{alg:users-network}
\end{algorithm}

\smallskip{\em Estimating the volume of transactions.}
Recall that a crucial clustering condition is that the volume of transactions that stay within a cluster is larger than the volume of the transactions that go between clusters.
However, the privacy principles behind the design of the Lightning Network renders such transaction volume information hidden.
Thus, a second challenge in our setting is extracting estimates of the size of transactions between users. 
In order to estimate transaction volume, we assume that the volume of transactions between two connected users in $G$ is inversely proportional to the cost of a transaction.
Namely, that the volume of transactions between users $i$ and $j$ is $1 / (c_{i,j} + 1)$ where $c_{i,j}$ is the minimum cost of sending a transactions of 100 milli-satoshi from $i$ to $j$.

Note that the volume of transactions is estimated under the simplifying assumptions that all transactions have the same fixed volume, and therefore the cost for a transaction between a pair of users is fixed and specified using $cost_H$ on input.

We start with the largest strongly connected component of $H$ as it is the largest set of users such that any user can send a transaction to any other.
Also, to simplify the experiment, we assume that the volume of transactions from a user $u$ to another user $v$ is the same as the volume of transactions in the reverse order (from $v$ to $u$).

This approach of computing costs for each pair of users runs in time linear with the number of edges of $G$. Note that the number of updates of a cost is at most the number of edges adjacent to the set $users$, which is bounded by $O(|users|)$ as the degree of each vertex is bounded.

\smallskip{\em Results.}
\Cref{fig:network-clustering} and~\Cref{tab:clustering-details} shows that we obtain 18 clusters with average within-cluster transaction volume of 826.1, and between-cluster transaction volume of 4.27. This gives us a ratio of 193.2 of within-cluster and between-cluster transactions, which confirms our experimental hypothesis.

\begin{table}[htb!]
\centering
\caption{Details of individual clusters.}
\label{tab:clustering-details}
\begin{tabular}{lrrrl}
\toprule
{Cluster} & Cluster Size &  Inside Volume &  Between Volume &   Ratio \\
\midrule
1  &           473 &               47.85 &                  2.73 &   17.53 \\
2  &           309 &                4.32 &                  0.33 &   13.09 \\
3  &           136 &                2.97 &                  0.66 &     4.5 \\
4  &           126 &               33.32 &                  0.29 &   114.9 \\
5  &           123 &                2.18 &                  1.67 &    1.31 \\
6  &           118 &                3.52 &                  0.29 &   12.14 \\
7  &            96 &              715.15 &                  0.33 &  2167.1 \\
8  &            76 &                0.79 &                  0.99 &     0.8 \\
9  &            67 &                0.18 &                  0.12 &     1.5 \\
10 &            64 &                0.12 &                  0.07 &     1.7 \\
11 &            59 &                4.20 &                  0.00 &       $\infty$ \\
12 &            57 &                0.37 &                  0.13 &     2.8 \\
13 &            54 &                0.65 &                  0.00 &       $\infty$ \\
14 &            54 &                3.09 &                  0.71 &    4.35 \\
15 &            31 &                1.06 &                  0.04 &    26.5 \\
16 &            30 &                0.22 &                  0.11 &     2.0 \\
17 &            25 &                6.07 &                  0.04 &   151.8 \\
18 &            19 &                0.04 &                  0.04 &     1.0 \\
\bottomrule
\end{tabular}
\end{table}

\begin{figure}[!htb]
  \centering
  \begin{minipage}[b]{0.4\linewidth}
    \includegraphics[width=\linewidth]{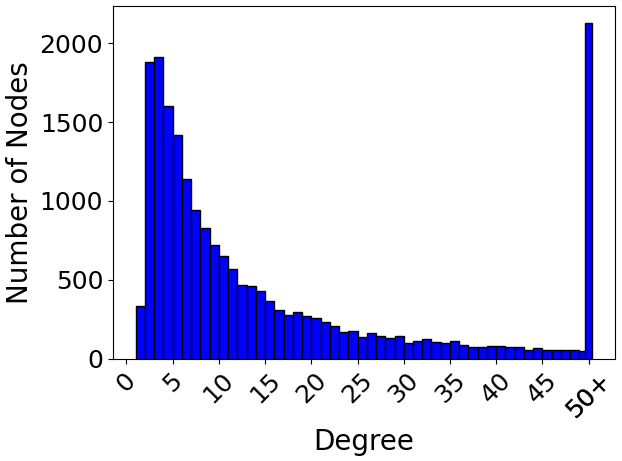}
    \caption{Degree distribution of the nodes in the largest connected component of the Lightning Network.}
    \label{fig:degree_histogram}
  \end{minipage}
  \hfill
  \begin{minipage}[b]{0.4\linewidth}
    \includegraphics[width=\linewidth]{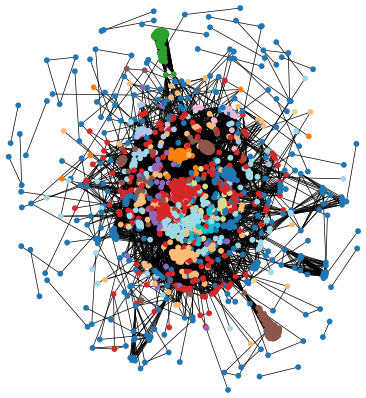}
    \caption{Clustering of the nodes in the reduced network $G$. Each color represents a different cluster.}
    \label{fig:network-clustering}
  \end{minipage}
\end{figure}

\section{Conclusion}\label{sec:conclusion}

In this work, we generalise a model from processing transactions over a single payment channel to an entire PCN. To this end, we first define a network creation stage of the problem which examines how to create a network with minimal capacity cost, and we show that the star is a $2$-approximation of the optimal capacity cost.
We then propose two algorithms to address weighted transaction processing over a network. The first algorithm assumes a star topology over the $p$ parties, but apart from that works without any further assumptions on the transaction sequence, and gives an approximation ratio of $\calO(p)$. The second algorithm assumes a double star topology and a clustering of the transaction sequence, and gives an approximation ratio of $\calO(\sqrt{p})$. 
Finally, we also perform an empirical case study estimating transaction information from the topology of the Lightning Network, and our estimates suggest that our clustering assumptions regarding the input transaction sequence are close to reality.

We believe our work is an important first step in the direction of designing optimal networks and algorithms for transaction processing in PCN.
We highlight two interesting directions for future work.
First, it would be interesting to extend our model and algorithms to the online setting, which will also complement the analysis of~\cite{AvarikiotiB0W19,online}.
Secondly, developing better estimates regarding transaction information in real world PCNs would also be paramount to developing better optimisation algorithms and heuristics.

\bibliography{refs}

\appendix

\section{Omitted proofs}
\subsection{Proof of~\Cref{lemma:lp}}\label{app:prooflp}
\lp*
\begin{proof}
  We first observe that adding channels to a graph can only decrease the capacity cost (adding channels can only make the shortest paths shorter).
  This means that the capacity cost is the lowest for the complete graph, no matter what is the optimal graph topology (as computed by minimising the \emph{total} cost of both capacity and channel creation).
  The LP computes the optimal capacity for the complete graph $K_p$, which is a superset of every graph, including the optimal graph.
\end{proof}

\subsection{Proof of~\Cref{thm:star}}\label{app:proofstar}
\star*
\begin{proof}
  Since for the complete graph $K_n$, we have $AC(\pseq, K_n) \le AC(\pseq, G)$, we simply need to show that $AC(\pseq, G') \le d \cdot AC(\pseq, G)$.
  We modify the solution of the LP for the complete graph by adding at most $d$-times the capacity and prove that this graph processes all transactions.

  We start with graph $G'$ with zero capacities.
  For all pairs of nodes $v$ and $u$, let the capacity between them be $w_{v,u}$.
  We find the shortest path between them in $G'$, since the diameter is at most $d$, the length of the path is at most $d$.
  Then, we increase the capacity of every channel on the path by $w_{v,u}$.
  We denote this capacity to a special bucket that will be used to process transactions that the LP sends along $u,v$.
  At the start at edge $k,l$, where $k$ is closer to $v$ and $l$ is to $u$, we increase the balance at the $k$'s side
  by $C_{v,u}^0$ (and $l$'s balance is $C_{u,v}^0$), the balance at the edge $v,u$ at the start as given by the LP.

  After we increase the capacities of $G'$ for every pair of nodes, we process the transactions.
  Any time the transaction of size $s$ was forwarded along some edge $u,v$, we can transfer $s$ on the shortest path
  between $u$ and $v$ from the bucket on the edge denoted $u,v$.
  If the bucket for some edge becomes negative, the solution of LP is infeasible.

  By this, we can process all transactions.
  Since the diameter is at most $d$, we know that the increase in capacity is at most $d$-times.
\end{proof}

\subsection{Proof of~\Cref{thm:pairwise}}\label{app:proofpairwise}
\pairwise*

\begin{proof}
    Every node has $p-1$ connections while the size of the channel is viewed as the minimum of the two sizes as denoted by the LP.
    The reserves are $\sqrt{3}$ times that value.
    Moreover, every channel in a star has funds amount as dictated by the LP.
    Altogether it gives $1 + (p-1) \cdot \sqrt{3}$ times of $\mathcal{C}$.

    The algorithm ensures that the in every channel on the right side, there is at least the amount prescribed by the LP.
    This means that all fully accepted ($x_i = y_i$) transactions are accepted.
    Moreover, from \cite{SchmidSY23}, we know that the rejection cost between two nodes is at most
    $\sqrt{3} + 1$ times the rejection cost of the optimal rejection cost $\mathcal{R}$. Thus,~\Cref{alg:modif} incurs a rejection cost of at most $(\sqrt{3} + 1)\mathcal{R}$ and capacity cost of at most $(1 + (p-1)\cdot \sqrt{3})\mathcal{C}$.
\end{proof}

\subsection{Proof of~\Cref{lem:whenMove}}\label{app:proofwhenmove}
\whenmove*
\begin{proof}
    Every node has $p-1$ connections while the size of the channel is viewed as the minimum of the two sizes as denoted by the LP.
    The reserves are $\sqrt{3}$ times that value.
    Moreover, every channel in a star has funds amount as dictated by the LP.
    Altogether it gives $1 + (p-1) \cdot \sqrt{3}$ times of $\mathcal{C}$.

    The algorithm ensures that the in every channel on the right side, there is at least the amount prescribed by the LP.
    This means that all fully accepted ($x_i = y_i$) transactions are accepted.
    Moreover, from \cite{SchmidSY23}, we know that the rejection cost between two nodes is at most
    $\sqrt{3} + 1$ times the rejection cost of the optimal rejection cost $\mathcal{R}$. Thus,~\Cref{alg:modif} incurs a rejection cost of at most $(\sqrt{3} + 1)\mathcal{R}$ and capacity cost of at most $(1 + (p-1)\cdot \sqrt{3})\mathcal{C}$.
\end{proof}

\subsection{Proof of~\Cref{lem:howMuchMove}}\label{app:proofhowmuch}
\howmuch*
\begin{proof}
    From the condition $3$ of the clustering conditions, we know that at least $2|S|M$ volume moves from the cluster
    and $2|S|M$ volume to the cluster.
    For the sake of contradiction, suppose that we can move at most $m < |S|M$ from $S$ to $C_i \setminus S$.
    
    From \Cref{lem:whenMove}, we know that we cannot move only on fully accepted transactions from $S$ and rejected transactions from $C_i\setminus S$.
    Let $m'$ be the incoming volume to $S$ that can be moved.
    Then in the subsequence in the solution of $LP$, $S$ receives $m'$ volume
    and gives up $2|S|M - (m-m')$ volume.
    That means on average it lost at least $2|S|M - (m-m')-m' = 2|S|M - m > |S|M$.
    From the Dirichlet principle, at least one node lost more than $M$.
    This is impossible since $M$ is the highest capacity in $C_i$ and the solution of the $LP$ ensures that for any subsequence the difference is at most $M$.
\end{proof}

\subsection{Proof of~\Cref{lem:transport}}\label{app:prooftransport}
\transport*

\begin{proof}
    Since there was $tr$ transported, inside the cluster there was at least $24\cdot tr\cdot |C_i|$ transported,
    and next to $S$, there was at least $6tr|S|$ transported.
    From \Cref{lem:howMuchMove}, that means that at least $tr|S|$ goes away from $S$ to other nodes.

    Let $S_0 = S$ and $S_i$ be the nodes that we can transport nonzero amount from $S_{i-1}$.
    Similarly, let $T_0 = T$ and $T_i$ be the nodes that we can transport nonzero amount to $T_{i-1}$.
    Since the clustering conditions, we have that the size of $S_i$ and $T_i$ is increasing if their size is at most half of the cluster.
    That means they intersect at some point.

    We look at the minimum $mn$ that can be transported and we change the transactions such that in $v$ the reserves are decreased by $mn$ and in $u$ they are increased by $mn$.
    After the change, we know that the clustering conditions still hold,
    now we just need to decrease the inequalities by $mn$.
    Therefore, we can find the path again.
\end{proof}

\subsection{Proof of~\Cref{thm:main}}\label{app:proofmain}
\main*
\begin{proof}
    We are using a double star, which means it is at least a $4$ approximation of the optimal topology
    from \Cref{thm:star}. 
    
    The algorithm uses \Cref{alg:modif} for every pair of clusters (between middle nodes)
    to accept and reject transactions between them.
    This incurs cost for additional capacity equal $\calO(\sqrt{p})$ times the cost of the capacity cost of the $LP$
    for the capacity between clusters.
    The rejection cost is at most $(\sqrt{3}+1)$ times higher than the rejection cost between clusters.
    All of this follows from \Cref{thm:pairwise}.

    Next, from \Cref{lem:transport}, we know that we can transport reserves such that everything between clusters
    is accepted or rejected from the previous algorithm.
    It changes the values of the $LP$ by at most $\calO(\sqrt{p})$ times the volume of rejected transactions between clusters.

    Finally, again from \Cref{thm:pairwise}, we can solve the problem inside the clusters.
    This incurs capacity cost $\calO(\sqrt{p})$ times the capacity cost of the solution inside the clusters.
    Moreover, the cost for rejection is $(\sqrt{3} + 1)$ times the rejection cost in the modified solution of $LP$ ($X_t'$).
    The modified rejection cost of solution of $LP$ is the rejection cost of the original solution plus
    at most $\calO(\sqrt{p})$ times the rejection cost of transactions between clusters (from \Cref{lem:transport}). Altogether, we get that the algorithm is $\calO(\sqrt{p})$ approximation algorithm.
\end{proof}

\section{Omitted details of star topology solution}

\subsection{Modified channel algorithm details}\label{app:channelalg}
We now describe the decision making process in~\Cref{alg:modif} for transactions going from $u$ to $v$ (the other direction is symmetric).
~\Cref{alg:modif} accepts a transaction as long as the reserve $R_u$ is above a threshold (i.e., $R_u \ge \frac{\sqrt{3}-1}{2}M$, line~\ref{algline:balstart}). Accepting a transaction decreases the capacities $C_u$ and $R_u^v$ of $u$ by $y_i$ and $x_i-y_i$ respectively (refer to the ``accept'' case in~\Cref{fig:channeltostar}).
If accepting the transaction would make $R_u$ decrease by too much and the LP demands to accept the transaction \emph{weakly} (i.e., $\frac{y_i}{x_i} < \frac{\sqrt{3}}{\sqrt{3}+1}$, line~\ref{algline:rej}), the algorithm rejects the transaction. Rejecting a transaction from say $u$ to $v$ decreases both $C_u$ and $R_u^v$ of by $y_i$ (refer to the ``reject'' case in~\Cref{fig:channeltostar}).
If the reserve on $u$'s side after accepting the transaction would be too small (i.e., $R_u < \frac{\sqrt{3}-1}{2}M$, line~\ref{algline:forward}), the algorithm looks at all subsequent transactions \emph{between} $u$ and $v$ in the transaction sequence. 
It accepts all transactions that increases $R_u$ while rejecting weakly accepted transactions from $u$ to $v$ (rejecting them also increases $R_u$).
From \emph{strongly} accepted transactions (i.e., $\frac{y_i}{x_i} \ge \frac{\sqrt{3}}{\sqrt{3}+1}$) from $u$ to $v$, the algorithm tries to accept
all of them while simulating the movement of capacities between the reserves until either $R_u > \frac{\sqrt{3}-1}{2}M$ or $R_u < 0$.
If $R_u > \frac{\sqrt{3}-1}{2}M$, then all transactions can be accepted.
If $R_u < 0$, the algorithm rejects strongly accepted transactions until $R_u > \frac{\sqrt{3}-1}{2}M$.
The proof in~\cite{SchmidSY23} shows that doing so maintains the approximation ratio at $\sqrt{3}+1$ for the rejection cost.

\smallskip{\em Algorithm details.}
The simulating of the reserves is implemented by a heap.
The heap is empty while $R_u \ge \frac{\sqrt{3}-1}{2}M$, and
when $R_u < \frac{\sqrt{3}-1}{2}M$ it starts to fill up.
When $R_u < 0$, then the decision is changed for transactions from the top of the heap until $R_u \ge \frac{\sqrt{3}-1}{2}M$ again.

Observe that for \Cref{alg:modif} the funds on both sides can shift and the reserves are not affected.
This allows us to compose the algorithm from pairwise reserves with shared funds denoted by the linear program.

The algorithm describes decisions for the star, so a transaction goes from $u$ to $c$ (star center) and then from $c$ to $v$.
That means it needs to traverse two channels.
The algorithm sees it as only pairwise interaction.
Since the decisions for the two channels are always the same for both channels (a transaction is either accepted or rejected for both), we can have the reserves stored twice, once between $u$ and $c$, the second time between $v$ and $c$.
The value $R_u$ is then the reserves at $u$ in channel $u,c$ and at $c$ in the channel $c,v$.
Both of these reserves move in sync.

\section{Omitted details of clustering algorithm}
\subsection{Stochastic Block Model details}\label{app:sbm}
The planted partition model creates a random graph where nodes are partitioned into clusters.
Inside a cluster, two nodes are connected with probability $q_1$, otherwise, they are connected with probability $q_2$.
The planted partition model creates an unweighted and undirected graph.
Our graph created by the packets is directed, weighted, and temporal.
These changes call for modification of the planted partition model.

The biggest change to the model is brought by the temporality.
Some edges follow after others, we require that all of the conditions hold for
some continuous subsequence of edges incident to some cluster of bigger size.

To make the graph weighted, we view $q_1$ and $q_2$ as ratios between volumes of the packets between the clusters and inside the cluster.
This view is expressed by the condition $1$ of the clustering conditions.

To make the graph directed, we allow for arbitrary orientation of the edges with the condition that at least $\frac{1}{3}$ of the volume needs to be directed in both directions.
It is expressed by condition $2$ of the clustering conditions.
Note that $\frac{1}{3}$ of the volume can be oriented in any direction, no matter the other packets.

Finally, the condition $2$ of the clustering conditions ensures that some volume is incident to every node.

Note that all of the conditions would be satisfied with high probability in the stochastic block model.
Moreover, there are other conditions that would be satisfied (for instance any two clusters would send packets between each other at the same rate) that we do not require.
From the above the graph generated by the planted partition model with $\sqrt{p}$ partitions of size $\sqrt{p}$ and $q_2 = \frac{1}{p}$ and $q_1 = 1-q_2$ satisfies all clustering conditions with high probability.

\subsection{Linear program for double star}\label{app:doubleStarLP}
Let the set of leaf nodes be $L$ and the set of middle nodes be $M$, and $c$ denote the central node of the star.
For a leaf node $u$, let $M(u)$ be the middle node corresponding to $u$'s cluster.

For a leaf node $v$ and natural number $i$, we use
$C_{v,i}$ and $C_{v,i}'$ to denote the capacity at the channel $(v,M(v))$ after processing the $i$-th transaction on the side of $v$ and $M(v)$ respectively.
Moreover let $C_v$ denote the total capacity of the channel $(v,M(v))$, i.e., $C_v = C_{v,i} + C_{v,i}'$.
For a middle node $v$ and $i$, we use
$M_{v,i}$ and $M_{v,i}'$ to denote the capacity at the channel $(v,c)$ after processing the $i$-th transaction on the side of $v$ and $c$ respectively.
Moreover, let $M_v$ denote the total capacity of the channel $(v,c)$.

\begin{align}\label{lp:doublestar}
  \qquad\qquad \text{minimise }&\sum_{u \in L} C_u + \sum_{u \in M} M_u + \\
  &\sum_{i} f(x_i-y_i) + \frac{m(x_i-y_i)}{x_i}\\
  \forall i :& 0 \le y_i \le x_i \notag\\
  \forall (v,u,x_i): &  C_{v}, C_{u} \ge x_i\notag\\
                     &  C_{v,i} = C_{v,i-1} - y_i\notag\\
                     &  C_{v,i}' = C_{v,i-1}' + y_i\notag\\
                     &  C_{u,i}' = C_{u,i-1}' - y_i\notag\\
                     &  C_{u,i} = C_{u,i-1} + y_i\notag\\
  \forall (v,u,x_i) \text{ between clusters: }&  M_{M(u)}, M_{M(v)} \ge x_i\notag\\
                     &  M_{M(v),i} = M_{M(v),i-1} - y_i\notag\\
                     &  M_{M(v),i}' = M_{M(v),i-1}' + y_i\notag\\
                     &  M_{M(u),i}' = M_{M(u),i-1}' - y_i\notag\\
                     &  M_{M(u),i} = M_{M(u),i-1} + y_i\notag
\end{align}

\subsection{Algorithm to transport cluster reserves}\label{app:balanceincluster}

Here we detail~\Cref{alg:transport}, which transfers cluster reserves from well-funded to depleted nodes \emph{within} a cluster, so that the cluster nodes have sufficient reserves to handle transactions that go \emph{between} the clusters.

\begin{algorithm}[t]
  \begin{algorithmic}[1]
    \Require Subsequence of transactions $X_t$ inside one cluster, two nodes $v$, $u$, amount $A$.
    \Ensure Modified $X_t$ such that $A$ in reserves is transported from $v$ to $u$.
    \Procedure{Transport}{$X_t$, $v$, $u$, $A$}
        \State $M \gets \Call{PathBetween}{v,u}$
        \State $mn \gets \Call{MinimumOnPath}{M}$ 
        \State $mn \gets \min(mn, A)$
        \For{$(dir, x,y,x_i,y_i)\in M$}
            \If{$dir$ from $v$ to $u$}
                \State $y_i \gets y_i - mn$
            \Else
                \State $y_i \gets y_i + mn$
            \EndIf
        \EndFor
        \If{$A-mn > 0$}
            \State $\Call{Transport}{X_t, v,u,A-mn}$ 
        \EndIf
    \EndProcedure
  \end{algorithmic}
  \caption{Transporting reserves between nodes.}\label{alg:transport}
\end{algorithm}

\end{document}